\let\cline\cmidrule
\newcommand{\hide}[1]{}
\newtheorem{theorem}{Theorem}[section]
\newtheorem{proposition}[theorem]{Proposition}
\newcommand{\ub}[1]{u^{#1}}         % ub on the nondominated points in a triangle
\newcommand{\tri}[1]{\Delta^{{#1}}} % triangle
\newcommand{\sgr}[1]{{\mathcal G}^{#1}} % group with single triangle i 
\newcommand{\gr}[2]{{\mathcal G}^{{#1}{#2}}} % group from triangle i to triangle j
\newcommand{\gri}{\mathcal G} % grouping
\newacronym{moco}{MOCO}{Multi-Objective Combinatorial Optimization}
\newacronym{boco}{BOCO}{Bi-Objective Combinatorial Optimization}
\newacronym{momst}{MOMST}{Multi-Objective Minimum Spanning Tree}
\newacronym{bomst}{BOMST}{Bi-Objective Minimum Spanning Tree}
\newacronym{dm}{DM}{Decision Maker}
\newacronym{moop}{MOOP}{Multi-Objective Optimization Problem}
\newacronym{mst}{MST}{Minimum Spanning Tree}
\begin{document}

\begin{frontmatter}

\title{Grouping Strategies on Two-Phase Methods for Bi-objective Combinatorial Optimization}

\author[1]{Felipe O.~Mota\corref{cor1}}\ead{felipemota@dei.uc.pt}
\author[1]{Luís Paquete}\ead{paquete@dei.uc.pt}
\author[2]{Daniel Vanderpooten}\ead{daniel.vanderpooten@lamsade.dauphine.fr}

\cortext[cor1]{Corresponding author}

\affiliation[1]{organization={University of Coimbra, CISUC/LASI – Centre for Informatics and Systems of the University of Coimbra}, 
    adressline={Rua Sílvio Lima,  Pinhal de Marrocos}, 
    postcode={3030-290}, 
    city={Coimbra}, 
    country={Portugal}}

\affiliation[2]{organization={LAMSADE, Université Paris Dauphine, Université PSL, CNRS}, 
    adressline={Pl. du Maréchal de Lattre de Tassigny}, 
    postcode={75016}, 
    city={Paris},
    country={France}}

\begin{abstract}
Two-phase methods are commonly used to solve bi-objective combinatorial optimization problems. In the first phase, all extreme supported nondominated points are generated through a dichotomic search. This phase also allows the identification of search zones that may contain other nondominated points. The second phase focuses on exploring these search zones to locate the remaining points, which typically accounts for most of the computational cost. Ranking algorithms are frequently employed to explore each zone individually, but this approach leads to redundancies, causing multiple visits to the same solutions. To mitigate these redundancies, we propose several strategies that group adjacent zones, allowing a single run of the ranking algorithm for the entire group. Additionally, we explore an implicit grouping approach based on a new concept of coverage. Our experiments on the Bi-Objective Spanning Tree Problem demonstrate the beneficial impact of these grouping strategies when combined with coverage.

\end{abstract}

\begin{keyword}
multi-objective combinatorial optimization \sep two-phase methods \sep ranking algorithms \sep minimum spanning tree problem
\end{keyword}

\end{frontmatter}

\section{Introduction}
\label{sec:intro}

Multi-Objective Optimization is the field of study concerned with solving optimization problems with two or more conflicting objectives, called Multi-objective Optimization Problems. It has applications such as in politics [\citenum{gunasekara2014multi}], mechanics [\citenum{deb2012hybrid, jena2013multi}], economics [\citenum{mardle2000investigation}], and finance [\citenum{horn1994niched, ruspini1999automated}]. If solutions have a certain combinatorial structure (e.g. permutation, arrangement), we are facing a \acrfull*{moco} problem. 

In Multi-Objetive Optimization problems, it is very often assumed that the \acrfull*{dm} cannot express, in advance, his preferences concerning the relative importance of the  objectives. In these cases, providing the \acrshort*{dm} with a wide range of efficient solutions is important. Under the notion of Pareto optimality, a feasible solution is \emph{efficient} if there exists no other feasible solution that provides better or equal values in all objectives, with at least one strict inequality. The image of an efficient solution is a \emph{nondominated} point in the objective space. The goal is to find the set of all the efficient solutions, called the \emph{efficient set}, and/or its image in the objective space, called the \emph{nondominated set}. It is common for this type of problem to have numerous efficient solutions, but it is expected that the \acrshort*{dm} can choose an option from the efficient set or the nondominated set by inspection. Finding the complete set of nondominated points (and respective efficient solutions) for a \acrshort*{moco} problem usually requires heavy computational effort. As a consequence, many different techniques were developed to improve exact approaches for these problems.

This work focuses on exact algorithms for \acrfull*{boco} problems, in particular, on two-phase methods [\citenum{ulungu1995two}]. In this strategy, the first phase finds a subset of the nondominated set by solving a sequence of scalarized problems obtained by reformulating the original \acrshort*{boco} problem into a single-objective weighted-sum problem. 
The obtained \emph{supported} points
define \emph{search zones}
that are delimited by adjacent supported
points and by a given upper bound -- 
thus forming triangles in the 
bi-objective case --
and that may contain nondominated points. 
The goal of the second phase is to search nondominated points inside these triangles.

Several enumeration methods have been used to find nondominated points within those triangles. A widespread approach is using \emph{ranking strategy}, which enumerates solutions in an order defined by a given weighted sum objective function until it finds the remaining nondominated points. 
This two-phase method has
being acknowledged as the best alternative for the multi-objective assignment problem \cite{pedersen2008bicriterion, przybylski2008two, przybylski2010three} and the minimum spanning tree problem \cite{steiner2008computing}. 
However, this strategy has the disadvantage of finding points outside the triangle currently explored, as well as dominated points. 
Moreover, as the ranking strategy is run individually in each triangle, the same point may be found several times. As a consequence, this approach leads to a waste of computational effort. 

Our goal is to improve two-phase methods that use ranking strategies by grouping two or more adjacent triangles that will be explored together. This implies that fewer runs of the ranking algorithm are necessary, reducing redundancy, and we expect the efficiency of the second phase to be improved. However, grouping needs to be done
carefully, as grouping a large number of triangles can also lead to an unnecessarily long run of the ranking algorithm. 

We present several grouping strategies and discuss the 
trade-offs that can be obtained in terms of redundancy and search depth and illustrate the application of these techniques to the 
\acrfull*{bomst}. Assuming complete information about the nondominated set, we show how to obtain an optimal grouping by solving a shortest path problem on a complete acyclic graph where the vertices correspond to the supported points and the cost of each arc is the computational effort to explore a group of consecutive triangles.
Moreover, our experimental results on the \acrshort*{bomst} problem show that grouping pairs or triples of adjacent triangles can substantially reduce the computational effort compared to a baseline approach without any grouping.
We also show that competitive results can be achieved by exploring certain geometrical properties of those triangles. Finally, we propose an improvement for the covering technique proposed in \cite{Steiner2003},
which allows skipping some triangles from the two-phase exploration.

The remainder of this paper is organized as follows. Section~\ref{sec:defs} presents the basic concepts used in the paper. Section~\ref{sec:related} contains the state of the art of two-phase methods for \acrshort*{moco} problems. In Section~\ref{sec:group} we present how two triangles can be grouped. Section~\ref{sec:optGrouping} shows how to obtain, \emph{a posteriori}, the optimal partition of the triangles into groups. Section~\ref{sec:grouping} discusses several grouping strategies that can be employed in the second phase.
Section~\ref{sec:results} reports the experimental results for the aforementioned grouping strategies. Finally, Section~\ref{sec:conc} provides conclusions and further ideas.

\section{Definitions}
\label{sec:defs}

We introduce the following component-wise ordering in $\mathbb R^2$:
Given points $z$ and $\bar z$ in $\mathbb{R}^2$, we consider the following binary relations, respectively referred to as \emph{(Pareto) strong dominance}, \emph{weak dominance}, and \emph{dominance}:

$$
\begin{array}{lll}
    z \prec \bar z \iff & z_j < \bar z_j & \forall j \in \{1,2\}\\
    z \preceqq \bar z \iff & z_j \leq \bar z_j & \forall j \in \{1,2\}\\
    z \preceq \bar z \iff & \begin{cases}
						 z \preceqq \bar z \\
						z \neq \bar z
					 	\end{cases}  \\

\end{array}
$$

We also introduce the following non-negative cones:
$$
\begin{array}{lll}
    \mathbb R^2_\geqq & = & \left\{z \in \mathbb R^2: \mathbf 0 \preceqq  z \right\}\\
    \mathbb R^2_\geq  & = & \left\{z \in \mathbb R^2: \mathbf 0 \preceq  z \right\} =  \mathbb R^2_\geqq \setminus \{\mathbf 0 \}\\
%    \mathbb R^2_>     & = & \left\{z \in \mathbb R^2: z > \mathbf 0 \right\}\\
\end{array}
$$

We assume bi-objective optimization problems 
as follows
\begin{equation}
\min_{x \in X} f(x)=\left(f_1(x), f_2(x)\right)
\label{prob:2}
\end{equation}
where $X$ is the set of feasible solutions. Let $Y=f(X)$ be the set of images of all solutions in $X$.
We assume problems with linear objective functions with integer
coefficients and integer, or in most cases, binary decision variables. Therefore, we assume in this paper 
that the objective functions take integer values, thus
$Y \subset \mathbb Z^2$. 

The weighted-sum scalarization of Problem \eqref{prob:2}
is of particular interest for our work and 
is formulated as follows, for a given weight vector
$w = (w_1, w_2) \in \mathbb R^2_\geq$
\begin{equation}
\min_{x \in X} f_w(x)=w_1 f_1(x) + w_2 f_2(x).
\label{prob:3}
\end{equation}

A point $y \in Y$ is said to be \emph{nondominated} if there is no point $y' \in Y$ such that $y' \preceq y$. Let $Y_N$ be the set of nondominated points of Problem~(\ref{prob:2}). Three types of nondominated points can be distinguished: supported extreme, supported nonextreme, and unsupported points.
The corresponding sets are denoted respectively as $Y_{NSE}$, $Y_{NSN}$ and $Y_{NU}$.
Let $\mathcal C_N$ denote the 
convex hull of $Y_N + \mathbb R^2_\geqq$,
where operator $+$ denotes the Minkowski sum.
Let $bd(\mathcal C_N)$ 
and $int(\mathcal C_N)$ 
denote the boundary and the interior 
of $\mathcal C_N$, respectively.
Points in $Y_{NSE}$ are vertices in $\mathcal C_N$, points in $Y_{NSN}$ are in $bd(\mathcal C_N) 
\setminus Y_{NSE}$, and points in $Y_{UN}$ are nondominated points in $int(\mathcal C_N)$.
We now recall the following well-known properties.
Any unsupported point of Problem~\eqref{prob:2} is not optimal for Problem~\eqref{prob:3} for any weight vector $w$. However, there always exists a weight vector $w$ for which a supported (extreme or nonextreme) point 
of Problem~\eqref{prob:2} is optimal for Problem~\eqref{prob:3}.

\section{Related work}
\label{sec:related}

The two-phase method has been widely used to solve 
bi-objective versions of many standard optimization problems, including shortest path~\cite{Mote1991,raith2009comparison}, minimum spanning tree~\cite{amorosi2022two, Andersen1996,hamacher1994spanning,steiner2008computing}, assignment~\cite{przybylski2008two,Tuyttens2000,ulungu1995two}, network flows~\cite{ehrgott1999integer,lee1993bicriteria,raith2009two,SedenoNoda2001}, knapsack~\cite{ulungu1995two,visee1998two},
set covering~\cite{prins2006two}, and  max-ordering~\cite{ehrgott2003solving}.
This method was first proposed by Ulungu 
and Teghem~\cite{ulungu1995two}. The first phase consists 
of solving a series of weighted-sum problems to 
obtain set $Y_{NSE}$ by a general 
technique known as dichotomic search 
\cite{aneja1979bicriteria}, which recursively
bisects the objective space.

In the second phase, the goal is to find
new nondominated points between consecutive
 supported extreme points found in the first
phase. 
Let $Y_{NSE} = \{y^1, y^2, \ldots, y^{m+1}\}$ be the extreme supported points ordered by increasing value of objective $f_1$ and decreasing value of objective $f_2$.
Each pair of consecutive points in $Y_{NSE}$, $y^i$ and $y^{i+1}$, and point $(y^{i+1}_1,y^i_2)$
 define a \emph{triangle} $\tri{i}$, for $i \in \{1,\ldots,m\}$. Assuming integrality of the feasible points, each triangle $\tri{i}$ is associated with a \emph{local upper bound} $\ub{*} = (y^{i+1}_1 -1,  y^i_2 - 1)$ delimiting the search zone where points in $Y_N \setminus Y_{NSE}$ might lie within triangle $\tri{i}$. More precisely, any point $y$ in $Y_N \cap \tri{i}$ is such that $y \preceqq \ub{*}$. Figure \ref{fig:ub} (left) illustrates a triangle $\tri{i}$
defined by points $y^i$ and $y^{i+1}$
and the location of upper bound $\ub{*}$.

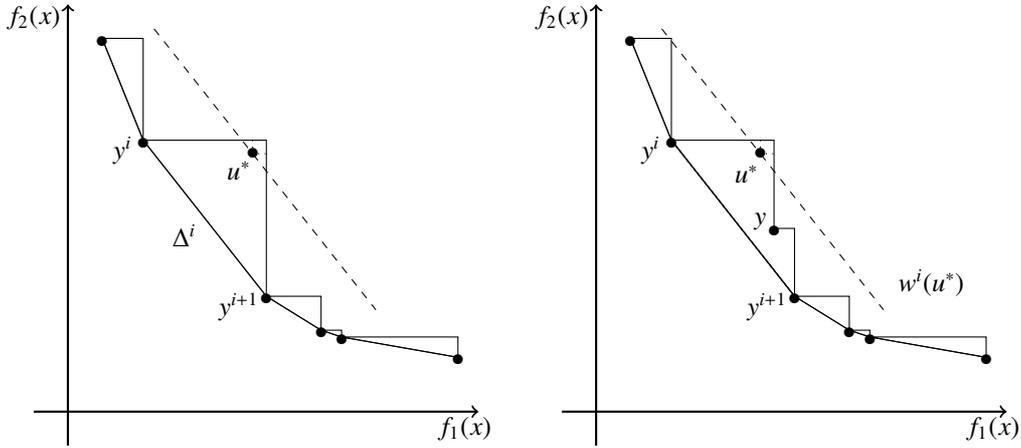
\begin{figure}[t]
\begin{subfigure}[t]{0.43\textwidth}
    \centering
            \begin{tikzpicture}[scale=0.9]
                \draw[black] 
                (1.5,6) coordinate (A3) 
                -- (2.1,4.5) coordinate (A1) 
                -- (3.9,2.2) coordinate (B2)  
                -- (4.7,1.7) coordinate (B1) 
                --  (5,1.6)  coordinate (C1) 
                --  (6.7,1.3)  coordinate (C2);
                
                \draw[black] (A3) --  (A3 -| A1)  -- (A1) --  (A1 -| B2) -- (B2) -- (B2 -| B1)  -- (B1) -- (B1 -| C1) -- (C1) -- (C1 -| C2) -- (C2) -- (C1) -- (B1) --  (B2) -- (A1) -- cycle;
                
                \draw[thick,->] (0.5,0.5) -- (7,0.5);
                \draw[thick,->] (1,0) -- (1,6.5);
                \draw ($(A3) - (0,0.05)$) node {
                \textbullet};
                \draw ($(A1) - (0,0.05)$) node {
                \textbullet};
                \draw ($(B1) - (0,0.05)$) node { \textbullet};
                \draw ($(B2) - (0,0.05)$) node { \textbullet};
                \draw ($(C1) - (0,0.05)$) node { \textbullet}; 
                \draw ($(C2) - (0,0.05)$) node { \textbullet};
                \draw[black] ($(A1 -| B2) - (0.2,0.2)$) node { \textbullet};   
                
                %\draw[dashed] ($(A1 -| B2) - 0.8*(1.8,-2.3)$) -- 
                %($(A1 -| B2) + (1.8,-2.3)$);

                \node at ($(A1 -| B2) - (0.4,0.45)$) {$\ub{*}$};
                \node at ($(A1 -| B2) - (1.2,1.4)$) {$\Delta^{i}$};
                \node at ($(A1) - (0.3,0.15)$) {$y^i$};
                \node at ($(B2) - (0.4,0.15)$) {$y^{i+1}$};

                \draw[dashed] ($(A1 -| B2) - 0.8*(1.8,-2.3) - (0.2,0.2)$) -- 
                ($(A1 -| B2) + (1.8,-2.3) - (0.2,0.2)$);

                \draw[dotted, thin] ($(A1 -| B2) - (0.2,0)$) -- ($(A1 -| B2) - (0.2,0.2)$) -- ($(A1 -| B2) - (0,0.2)$);
                
                \node at (6.8, 0.25) { $f_1(x)$};
                \node at (0.5, 6.3) {$f_2(x)$};
            
            \end{tikzpicture}
\end{subfigure}
\qquad \quad
\begin{subfigure}[t]{0.43\textwidth}
    \centering
            \begin{tikzpicture}[scale=0.9]

                \draw[black] 
                (1.5,6) coordinate (A3) 
                -- (2.1,4.5) coordinate (A1) 
                -- (3.9,2.2) coordinate (B2)  
                -- (4.7,1.7) coordinate (B1) 
                --  (5,1.6)  coordinate (C1) 
                --  (6.7,1.3)  coordinate (C2);
                     );
              
                \draw[black] 
                (A1) 
                -- (3.6,4.5) coordinate (Z1a) 
                -- (3.6,3.2) coordinate (Z1) 
                -- (3.9,3.2) coordinate (Z1b) 
                -- (B2);
              
                \draw[black] (A3) --  (A3 -| A1)  -- (A1) --   (B2) -- (B2 -| B1)  -- (B1) -- (B1 -| C1) -- (C1) -- (C1 -| C2) -- (C2) -- (C1) -- (B1) --  (B2) -- (A1) -- cycle;
                
                \draw[thick,->] (0.5,0.5) -- (7,0.5);
                \draw[thick,->] (1,0) -- (1,6.5);

                \draw ($(Z1) - (0,0.05)$) node {
                \textbullet};
                \draw ($(A3) - (0,0.05)$) node {
                \textbullet};
                \draw ($(A1) - (0,0.05)$) node {
                \textbullet};
                \draw ($(B1) - (0,0.05)$) node { \textbullet};
                \draw ($(B2) - (0,0.05)$) node { \textbullet};
                \draw ($(C1) - (0,0.05)$) node { \textbullet}; 
                \draw ($(C2) - (0,0.05)$) node { \textbullet};
                \draw[black] ($(A1 -| Z1)  - (0.2,0.2)$) node { \textbullet};   
                
                \draw[dashed] ($(A1 -| Z1) - 0.8*(1.8,-2.3) - (0.2,0.2)$) -- 
                ($(A1 -| Z1) + (1.8,-2.3) - (0.2,0.2)$);

                \node at ($(A1 -| Z1) + (0.4,0.2) - (0.8,0.7)$) {$\ub{*}$};
                %\node at ($(A1 -| B2) - (0.6,0.6)$) {$\Delta^{i}$};
                \node at ($(A1) - (0.3,0.15)$) {$y^i$};
                \node at ($(B2) - (0.4,0.15)$) {$y^{i+1}$};
                \node at ($(Z1) + (-0.2,0.1)$) {$y$};

                \node at ($(B2) + (2,0.2)$) {$w^i(\ub{*})$};

                \draw[dotted, thin] ($(A1 -| Z1) - (0.2,0)$) -- ($(A1 -| Z1) - (0.2,0.2)$) -- ($(A1 -| Z1) - (0,0.2)$);

                \node at (6.8, 0.25) { $f_1(x)$};
                \node at (0.5, 6.3) {$f_2(x)$};
            
            \end{tikzpicture}    
\end{subfigure}
\caption{Illustration of a triangle $\tri{i}$ 
and its upper bound $\ub{*}$ (left)
and the updated upper bound $\ub{*}$ given 
that point $y$ was found in $\tri{i}$ (right)}
\label{fig:ub}
\end{figure}

Some of the two-phase methods above use a ranking algorithm for the
second phase~\cite{amorosi2022two,ehrgott2003solving,przybylski2008two,raith2009two,steiner2008computing}.
The concept of ranking algorithms for single-objective optimization problems was proposed by Murty \cite{murty1968algorithm} for the assignment problem and has been adapted to many other applications \cite{eppstein16}. It requires that an optimal solution for the problem is known. From that solution, by fixing variables and generating a sequence of sub-problems, it generates the $k$ best solutions in non-decreasing order of the objective function value. 

While the upper bound $u^*$ provides a natural stopping criterion for the ranking algorithm, tighter bounds are obtained as new points are found within triangle $\tri{i}$.
Let $w^i = \left(w_1^i,w_2^i\right)$ be the weight vector with respect to $\tri{i}$, where $w^i_1 = y^i_2-y_2^{i+1}$ and $w^i_2 = y^{i+1}_1-y_1^{i}$. 
We denote by $w^i(z)$ the weighted sum value of a point $z \in \mathbb Z^2$ with respect to weight vector $w^i$, that is,

$$w^{i}(z) = w^{i}_1 z_1 + w^{i}_2 z_2$$

Let $\left\{y^i, y^{i_1}, y^{i_2}, \ldots, y^{i_r}, y^{i+1}\right\} \subseteq Y_N \cap \tri{i}$, such that 
$y_1^i < y_1^{i_1} < y_1^{i_2} < \cdots < y_1^{i_r} < y_1^{i+1}$. Let $U^i$ be the set of \emph{local upper bounds} of this set of points defined as follows.

\begin{equation}
\label{eq:U}
 U^i = \left\{\left(y_1^{i_1} - 1,y_2^i - 1\right), \left(y_1^{i_2} - 1,y_2^{i_1} - 1\right), \ldots,\left(y_1^{i+1} - 1,y_2^{i_r} - 1\right)\right\}
\end{equation} 
 
We redefine the upper bound of $\tri{i}$  from $U^i$ as follows
$$
\ub{*} \in \arg\max_u \left\{w^{i}(u) \mid u \in U^i \right\}
$$
Figure \ref{fig:ub} (right) shows an updated upper bound $u^*$ in $\tri{i}$ given that a point $y \in Y_N$ is located 
inside $\tri{i}$. In this case, $U^i = 
\left \{\left(y_1 - 1,y^i_2 - 1\right),\left(y^{i+1}_1 - 1,y_2 - 1\right)
\right\}$.
The dashed line indicates the level set for $w^i(u^*)$.

When exploring $\tri{i}$, the ranking algorithm may enumerate 
solutions (efficient or not) whose points are located in other 
zones. Given that those solutions are likely to be enumerated 
again, the successive application of the ranking algorithm leads to redundancy. 
The information obtained outside the currently explored triangle is oftentimes discarded after the enumeration procedure. To our knowledge, only the work by Steiner and Radzik~\cite{steiner2008computing} partially takes advantage of 
the points found
outside the current zone of interest. In the following sections, we address this wasted effort and investigate the search on multiple zones simultaneously.

\section{Groups and groupings}
\label{sec:group}

Let a sequence of consecutive triangles $\tri{i}, \tri{i+1}, \ldots, \tri{j}$ define a  \emph{group} $\gr{i}{j}$.  
In particular, when a group consists of one triangle only $\tri{i}$ it is denoted as $\sgr{i}$. We denote by $Y^{ij}_N \subseteq Y_N$ the set of nondominated points within group $\gr{i}{j}$.

In the following, we extend the previous notions with respect to a group. We now define $w^{ij} = \left(w^{ij}_1,w^{ij}_2\right)$  with respect to 
a group $\gr{i}{j}$, where
$w^{ij}_1=y^i_2 - y^{j+1}_2$ and $w^{ij}_2 = y^{j+1}_1 - y^i_1$ and we denote by $w^{ij}(z)$ the weighted sum value of a point $z \in \mathbb Z^2$ with respect to weight vector $w^{ij}$.

Let $\gri = \{ \gr{i_1}{i_2},\gr{i_2+1,}{i_3},\ldots,
\gr{i_h}{i_{h+1}}\}$ be a \emph{grouping} where 
$i_1 = 1$, $i_{h+1} = m$, and $i_\ell \leq i_{\ell+1}$, $\ell = 1,\ldots,h$,
i.e., a partition of triangles $\tri{1},\ldots,\tri{m}$ 
into $h$ consecutive groups. For each group $\gr{i}{j} \in \gri$, we assume that points are generated by a ranking algorithm in non-decreasing order of the weighted sum $w^{ij}(y)$, starting from the solution corresponding to the following point
\begin{equation}
\label{eq:y}
y^* \in \arg\min_y \left\{w^{ij}(y) \mid y \in 
Y^{ij}_N  \right\}
\end{equation}

Note that $y^*$ corresponds to one of the extreme supported nondominated points $y^i,\ldots,y^{j+1}$.

The upper bound $u^*$ of a group $\gr{i}{j}$ is now computed by taking into account the union of all local upper bounds of the triangles involved in the group, that is, 
\begin{equation}
\label{eq:u}
\ub{*} \in \arg\max_u \left\{w^{ij}(u) \mid u \in U^{ij}  \right\}.
\end{equation}
where $U^{ij} = U^i \cup \cdots \cup U^j$.
Figure~\ref{fig:2} illustrates an example of 
a group $\gr{i}{j}$, its upper bound $\ub{*}$,
starting solution $y^*$ and level sets as 
dashed lines associated to the weighted-sum values 
of different points. Similar to the ranking approaches described in the previous section, the upper bound $\ub{*}$ can be used as a termination criterion for a 
ranking algorithm that explores group $\gr{i}{j}$. 

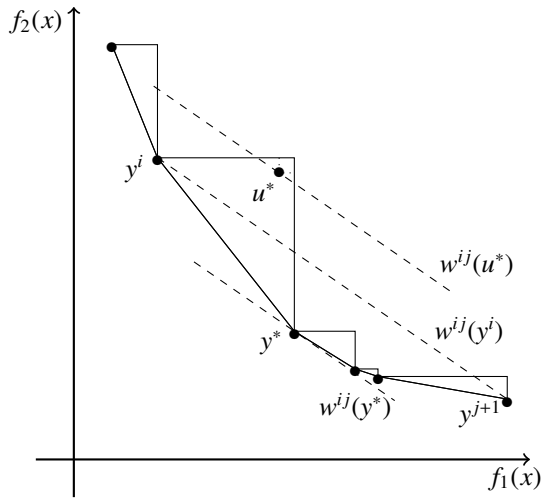
\begin{figure}[t]
    \centering
            \begin{tikzpicture}
            
                \draw[black] 
                (1.5,6) coordinate (A3) 
                -- (2.1,4.5) coordinate (A1) 
                -- (3.9,2.2) coordinate (B2)  
                -- (4.7,1.7) coordinate (B1) 
                --  (5,1.6)  coordinate (C1) 
                --  (6.7,1.3)  coordinate (C2);
                
                \draw[black] (A3) --  (A3 -| A1)  -- (A1) --  (A1 -| B2) -- (B2) -- (B2 -| B1)  -- (B1) -- (B1 -| C1) -- (C1) -- (C1 -| C2) -- (C2) -- (C1) -- (B1) --  (B2) -- (A1) -- cycle;
                
                \draw[thick,->] (0.5,0.5) -- (7,0.5);
                \draw[thick,->] (1,0) -- (1,6.5);
                \draw ($(A3) - (0,0.05)$) node {
                \textbullet};
                \draw ($(A1) - (0,0.05)$) node {
                \textbullet};
                \draw ($(B1) - (0,0.05)$) node { \textbullet};
                \draw ($(B2) - (0,0.05)$) node { \textbullet};
                \draw ($(C1) - (0,0.05)$) node { \textbullet}; 
                \draw ($(C2) - (0,0.05)$) node { \textbullet};
                \draw[black] ($(A1 -| B2) - (0.2,0.2)$) node { \textbullet};   
                
                \draw[dashed] ($(A1 -| B2) - 0.5*(3.3,-2.3)  - (0.2,0.2)$) -- 
                ($(A1 -| B2) + 0.7*(3.3,-2.3)  - (0.2,0.2)$);

                \draw[dashed] ($(B2) - 0.4*(3.3,-2.3)$) -- 
                ($(B2) + 0.4*(3.3,-2.3)$);

                \draw[dashed] (A1) -- (C2);

                \node at ($(A1 -| B2) - (0.4,0.45)$) {$\ub{*}$};
                %\node at ($(A1 -| B2) - (0.6,0.6)$) {$\Delta^{i}$};
                \node at ($(A1) - (0.3,0.15)$) {$y^i$};
                \node at ($(C2) - (0.35,0.2)$) {$y^{j+1}$};
                \node at ($(B2) - (0.3,0.15)$) {$y^*$};
                \node at ($(B2) + (0.8,-1)$) {$w^{ij}(y^*)$};
                \node at ($(B2) + (2.3,0)$) {$w^{ij}(y^i)$};
                \node at ($(B2) + (2.4,0.9)$) {$w^{ij}(\ub{*})$};
                
                \draw[dotted, thin] ($(A1 -| B2) - (0.2,0)$) -- ($(A1 -| B2) - (0.2,0.2)$) -- ($(A1 -| B2) - (0,0.2)$);
                
                \node at (6.8, 0.25) { $f_1(x)$};
                \node at (0.5, 6.3) {$f_2(x)$};
            
            \end{tikzpicture}
    \caption{Illustration of group $\gr{i}{j}$, its
    upper bound ($\ub{*}$) and starting solution ($y^*$).}
    \label{fig:2}
\end{figure}

Algorithm \ref{alg:code1} shows a pseudo-code that returns the nondominated set by exploring groups in a given grouping $\gri$. The set $Y_{NSE}$ is obtained in the first phase.
At each iteration of the outer loop, a set $S^{ij}$ collects the nondominated points within group $\gr{i}{j}$. 
Once the outer loop terminates, $S^{ij}$ contains the elements of $Y^{ij}_N$ for the underlying problem to be solved. 
At the end of Algorithm \ref{alg:code1}, set $S$ contains all elements in $Y_N$.
Procedure next($y^*$,  $w^{ij}$) calls the ranking algorithm to return the next best feasible point using weight vector $w^{ij}$, and procedure updateND($y^*$, $S^{ij}$) updates $S^{ij}$ if the visited point weakly dominates $u^*$ and is not dominated by any other point in the set. It is worth noting that once a point is inserted in $S^{ij}$, it is guaranteed to be in $Y_N$.
Moreover, $y^*$ may be found more than once if it has multiple associated solutions. Finally, method updateUB($y^*$, $S^{ij}$) removes obsolete upper bounds from $U^{ij}$, inserts new, tighter ones, and returns the highest upper bound $\ub{*}$.  
The while loop in Algorithm  \ref{alg:code1} terminates when $w^{ij}(y^*)$ exceeds the value $w^{ij}(\ub{*})$. 
For efficiency reasons, both $S^{ij}$ and $U^{ij}$ are maintained by a collection of data structures, one per each triangle $\tri{\ell}$ within group $\gr{i}{j}$. 
In order to perform efficient update operations, each of these data structures can
be implemented as a balanced binary tree. 

A special type of triangle should be considered when using groups. We call $\Delta_i$ an \emph{empty triangle} when $y^{i+1}_1 - y^{i}_1 = 1$ or $y^{i}_2 - y^{i+1}_2 = 1$. The integrality assumption about the objective function values (made in Section \ref{sec:defs}) implies that there is no unsupported or supported non-extreme point in $\Delta_i$. Thus, running a ranking algorithm in this zone is unnecessary. Combined with the fact that empty triangles are more prone to be found in the uppermost (highest values for $f_2$) and rightmost (highest values for $f_1$)  parts of the objective space, we do not use supported points in such regions for the second phase if they would form an empty triangle. For any second-phase grouping strategies defined in this paper, an empty triangle $\Delta_i$ is only considered if there are two non-empty triangles $\Delta_j$ and $\Delta_k$ such that $j < i < k$.

\begin{algorithm}[t]
        \caption{Algorithm for groups} \label{alg:code1}
   \begin{algorithmic}[1]
   \REQUIRE $\gri, Y_{NSE}$
    \STATE $S \gets \emptyset$
    \FORALL{$\gr{i}{j} \in \gri$}
        \STATE $S^{ij} \gets \{y^i,y^{i+1},\ldots, y^j, y^{j+1}\}$
        \STATE $y^* \gets \arg \min \left\{w^{ij}(y) \mid y \in S^{ij}\right\}$ 
        %\STATE $u, U^{ij} \gets \mbox{genUB}(S^{ij})$
        \STATE $u^* \gets \mbox{genUB}(S^{ij})$
        \WHILE {$w^{ij}(y^*)\leq w^{ij}(\ub{*})$}
            \STATE $y^* \gets \mbox{next}(y^*, w^{ij})$ 
            \STATE $S^{ij} \gets \mbox{updateND}(y^*,S^{ij})$
            \STATE $\ub{*} \gets \mbox{updateUB}(y^*, S^{ij})$
        \ENDWHILE
        \STATE  $S \gets S \cup \left\{S^{ij}\right\}$
    \ENDFOR
    \RETURN $S$         
   \end{algorithmic}
\end{algorithm}

In the following, we discuss a particular case 
where grouping a set of adjacent triangles proves  advantageous.
 Consider a set of consecutive supported points that are collinear and that define a set of adjacent triangles. Then, it is always beneficial, in terms of computational effort,
 to apply the ranking algorithm once to the group of these triangles rather than to apply it to each triangle separately. Actually, the computational effort required for the group corresponds to the computational effort required for \emph{one} specific triangle of this group as stated in the next result, which clearly shows that we should regroup all these triangles rather than considering any partition of these triangles.

\medskip
\begin{proposition}
\label{prop:1}
Let $y^i, \ldots, y^{j+1}$ be $j-i+2$  
consecutive supported points that are collinear and that define $j-i+1$ adjacent triangles $\Delta^i,\ldots,\Delta^j$, $j-i>0$. Then the computational effort required for group $\gr{i}{j}$ coincides with the computational effort required for one of the triangles of this group.
\end{proposition}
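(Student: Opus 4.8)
The plan is to exploit the fact that collinearity forces all the weight vectors involved to point in the same direction, so that every exploration -- the group and each of its triangles -- ranks the same feasible solutions in the same order. First I would record two consequences of collinearity. Since the segment directions $(y^{\ell+1}_1 - y^\ell_1,\, y^{\ell+1}_2 - y^\ell_2)$ are all parallel, the weight vectors $w^i,\ldots,w^j$ and $w^{ij}$ (all with positive components) are positive scalar multiples of one another; hence, measuring every weighted sum with the common vector $w^{ij}$, the orderings of $\mathbb Z^2$ induced by $w^{ij}$ and by each $w^\ell$ coincide. Moreover, a direct computation gives $w^{ij}(y^i)=\cdots=w^{ij}(y^{j+1})=:c$, so all supported points of the group lie on the level line $w^{ij}(z)=c$, and every point of $Y^{ij}_N$ satisfies $w^{ij}(y)\ge c$ (since $Y_N\subseteq\mathcal C_N$), with equality only on that line. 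In particular the starting point $y^*$ of \eqref{eq:y} is a supported point at level $c$, the same starting level used when any single triangle $\tri{\ell}$ is explored, since its endpoints $y^\ell,y^{\ell+1}$ also have $w^{ij}$-value $c$.

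With a common enumeration order and a common starting level, the computational effort of exploring any zone reduces to a single scalar: the weighted-sum value at which its ranking terminates, as the number of ranked solutions equals the number of feasible solutions with $w^{ij}$-value up to that threshold. I would therefore show that this \emph{termination level} decomposes as a maximum over the constituent triangles. The key structural observation is that a nondominated point discovered inside $\tri{\ell}$ only creates and updates local upper bounds belonging to $U^\ell$; it never interacts with another triangle. Consequently, exploring $\tri{\ell}$ within the group produces exactly the same points $Y^{ij}_N\cap\tri{\ell}$ and the same final family of local upper bounds as exploring $\tri{\ell}$ in isolation, so the group's nondominated points and final local upper bounds are the union, over $\ell=i,\ldots,j$, of those of the individual triangles.

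I would then argue that the ranking terminates exactly when the running value first exceeds the current $w^{ij}(\ub{*})$, which -- because upper bounds only tighten as points are found -- occurs precisely once all nondominated points of the zone have been enumerated and $\ub{*}$ has reached its final value. Hence the termination level of a zone equals $\max\{\,w^{ij}(y),\,w^{ij}(u)\,\}$ taken over all its nondominated points $y$ and all its final local upper bounds $u$. Combined with the union property above, the group's termination level equals $\max_{\ell=i}^{j} T_\ell$, where $T_\ell$ is the termination level of $\tri{\ell}$ explored alone. Taking $\ell^*$ to attain this maximum, the group and the single triangle $\tri{\ell^*}$ terminate at the same level and, sharing the common enumeration, rank the same solutions; therefore their computational efforts coincide, which is the claim.

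The step I expect to be the main obstacle is the precise justification of the decomposition $T_{\gr{i}{j}}=\max_\ell T_\ell$. Two things must be checked carefully: that the group never terminates before the deepest triangle $\tri{\ell^*}$ is fully resolved (no premature stop -- guaranteed by monotone tightening of $\ub{*}$ together with the fact that an unfound in-zone point keeps a local upper bound above the running level), and that it never continues beyond level $T_{\ell^*}$ (no overshoot -- since, once every triangle has reached its final bound, the group's $\ub{*}$ is the maximum of the triangles' final bounds and all nondominated points have already been found). Handling the degenerate unit-gap configurations, in which a nondominated point may have larger weighted-sum value than its adjacent upper bounds, is what forces the termination level to be expressed as a maximum over \emph{both} points and bounds rather than over bounds alone.
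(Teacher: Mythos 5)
Your proof is correct and takes essentially the same route as the paper's: collinearity makes all the weight vectors positive multiples of $w^{ij}$, so every exploration ranks the same solutions in the same order, and the group's effort coincides with that of the triangle whose isolated exploration stops at the deepest weighted-sum level (the paper's $\Delta^*$, your $\tri{\ell^*}$). You simply make explicit several steps the paper's terse proof leaves implicit, namely the common starting level $c$, the per-triangle decomposition of the local upper bounds, and the no-premature-stop/no-overshoot justification of the termination-level maximum.
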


\begin{proof}
First observe that, due to the collinearity assumption, the weights used for the successive explorations of each triangle are the same (up to a factor) and correspond to the weight $w^{ij}$ used for the unique exploration of group $\gr{i}{j}$. Consider now the exploration of each of the $j-i+1$ adjacent triangles $\Delta^i,\ldots,\Delta^j$ that lead to generate all nondominated points in each triangle. Each exploration stops when generating a point which reaches the (updated) upper bound of this triangle. 
Among these, let $y^*$ be the point with the largest weighted sum using weight $w^{ij}$ and $\Delta^*$ the triangle which was explored when generating $y^*$. Then exploring $\Delta^*$, or equivalently $\gr{i}{j}$, covers all triangles in $\Delta^i,\ldots,\Delta^j$ with the same number of enumerated solutions.
\end{proof}

The result above suggests that it
is beneficial to group adjacent triangles defined by supported points close to collinearity. 
However, grouping consecutive triangles defined by supported points that are not collinear can lead to unnecessary computational effort.
This behavior is shown in Figure~\ref{fig:badGroup} 
in a hypothetical example with three
adjacent triangles formed by four supported points,
$y_1$, $y_2$, $y_3$, and $y_4$.
The left plot illustrates the application
of the ranking algorithm to each of 
the triangles. 
The dashed lines correspond to the highest weighted-sum level for which a point ($\bar y^1$, $\bar y^2$, $\bar y^3$, 
respectively) was found in each triangle, surpassing the respective upper bound. 
%We further assume that $\bar y^1_2 = y^1_2-1$
%and $\bar y^3_1 = y^4_1-1$ \dv{Is the previous sentence really necessary?}.
The right plot shows the application
of the ranking algorithm on the group formed
by the three triangles, which terminates
at the highest of the three points,
with respect to the weighted sum explored
by the algorithm. Despite the redundant computations (gray) 
shown in the left plot, the grouping of the three triangles on the right 
plot leads to a much wider region
to be explored by the ranking 
algorithm. 

\begin{figure}[t]
\begin{subfigure}[t]{0.43\textwidth}
\centering
        \begin{tikzpicture}          
            \draw[thick,->] (0.5,0.5) -- (7,0.5);
            \draw[thick,->] (1,0) -- (1,6.5);
            \node at (6.8, 0.25) { $f_1(x)$};
            \node at (0.5, 6.3) {$f_2(x)$};

            \coordinate (P1) at (1.7,6);
            \coordinate (P2) at (2.3,4);
            \coordinate (P3) at (4.3,1.6);
            \coordinate (P4) at (6,1);
            
            \draw[gray!40, fill = gray!40] 
                (1.98, 5.07) --
                (2.3, 4.7) --
                (2.63, 3.59) --
                (P2) -- cycle;
                
            \draw[gray!40, fill = gray!40] 
                (4.1, 1.88) --
                (4.9, 1.58) --
                (5.15, 1.3) --
                (P3) -- cycle;

           \node at ($(P1 -| P2) + (-0.12,0.2)$) {$\bar y^{1}$};
	   \draw[black] ($(P1 -| P2) - (0.35,0.15)$) node { \textbullet};   

           \node at ($(P2 -| P3) - (0.52,0.52)$) {$\bar y^{2}$};
	   \draw[black] ($(P2 -| P3) - (0.78,0.78)$) node { \textbullet};   

           \node at ($(P3 -| P4) + (0.2,-0.25)$) {$\bar y^{3}$};
	   \draw[black] ($(P3 -| P4) - (0.15,0.35)$) node { \textbullet};

	        \draw[black] (P1) node {\small \textbullet}; 
           \node at ($(P1) + (-0.2,-0.25)$) {$ y^{1}$};            
	        \draw[black] (P2) node {\small \textbullet}; 
           \node at ($(P2) + (-0.2,-0.25)$) {$ y^{2}$};            
	        \draw[black] (P3) node {\small \textbullet}; 
           \node at ($(P3) + (-0.2,-0.25)$) {$ y^{3}$};            
	        \draw[black] (P4) node {\small \textbullet}; 
           \node at ($(P4) + (-0.2,-0.25)$) {$ y^{4}$};

            \draw (P1) -- (P1 -| P2) --
                  (P2) -- (P2 -| P3) -- 
                  (P3) -- (P3 -| P4) --
                  (P4) -- (P3) -- (P2) -- cycle;

            \draw[dashed] 
            ($(P2) + (0.2,0) + 0.22*(0.6, -2)$) --
            ($(P2) + (0.2,0) - 1.1*(0.6, -2)$);

            \draw[dashed]
            ($(P3) + (0,0.7) + 0.425*(2, -2.4)$) --
            ($(P3) + (0,0.7) - 1.18*(2, -2.4)$);

            \draw[dashed] 
            
            ($(P4) + (0,0.2) + 0.22*(1.7, -0.6)$) --
            ($(P4) + (0,0.2) - 1.1*(1.7, -0.6)$);
        \end{tikzpicture}
\end{subfigure}
\qquad \quad
\begin{subfigure}[t]{0.43\textwidth}
\centering
        \begin{tikzpicture}          
            \draw[thick,->] (0.5,0.5) -- (7,0.5);
            \draw[thick,->] (1,0) -- (1,6.5);
            \node at (6.8, 0.25) { $f_1(x)$};
            \node at (0.5, 6.3) {$f_2(x)$};

            \coordinate (P1) at (1.7,6);
            \coordinate (P2) at (2.3,4);
            \coordinate (P3) at (4.3,1.6);
            \coordinate (P4) at (6,1);

            \draw (P1) -- (P1 -| P2) --
                  (P2) -- (P2 -| P3) -- 
                  (P3) -- (P3 -| P4) --
                  (P4) -- (P3) -- (P2) -- cycle;

            \draw[dashed] ($(P4) + (0.15,0) + 0.002*(4.3, -5)$) --
            ($(P4) + (0.15,0) - 1.04*(4.3, -5)$);]

           \node at ($(P1 -| P2) + (-0.12,0.2)$) {$\bar y^{1}$};
	   \draw[black] ($(P1 -| P2) - (0.35,0.15)$) node { \textbullet};   

           \node at ($(P2 -| P3) - (0.59,0.52)$) {$\bar y^{2}$};
	   \draw[black] ($(P2 -| P3) - (0.78,0.78)$) node { \textbullet};   

           \node at ($(P3 -| P4) + (0.2,-0.25)$) {$\bar y^{3}$};
	   \draw[black] ($(P3 -| P4) - (0.15,0.35)$) node { \textbullet};   

    	        \draw[black] (P1) node {\small \textbullet}; 
           \node at ($(P1) + (-0.2,-0.25)$) {$ y^{1}$};            
	        \draw[black] (P2) node {\small \textbullet}; 
           \node at ($(P2) + (-0.2,-0.25)$) {$ y^{2}$};            
	        \draw[black] (P3) node {\small \textbullet}; 
           \node at ($(P3) + (-0.2,-0.25)$) {$ y^{3}$};            
	        \draw[black] (P4) node {\small \textbullet}; 
           \node at ($(P4) + (-0.2,-0.25)$) {$ y^{4}$};

        \end{tikzpicture}    
\end{subfigure}
\caption{Grouping triangles defined by supported points far from collinear may not be beneficial.}
\label{fig:badGroup}
\end{figure}
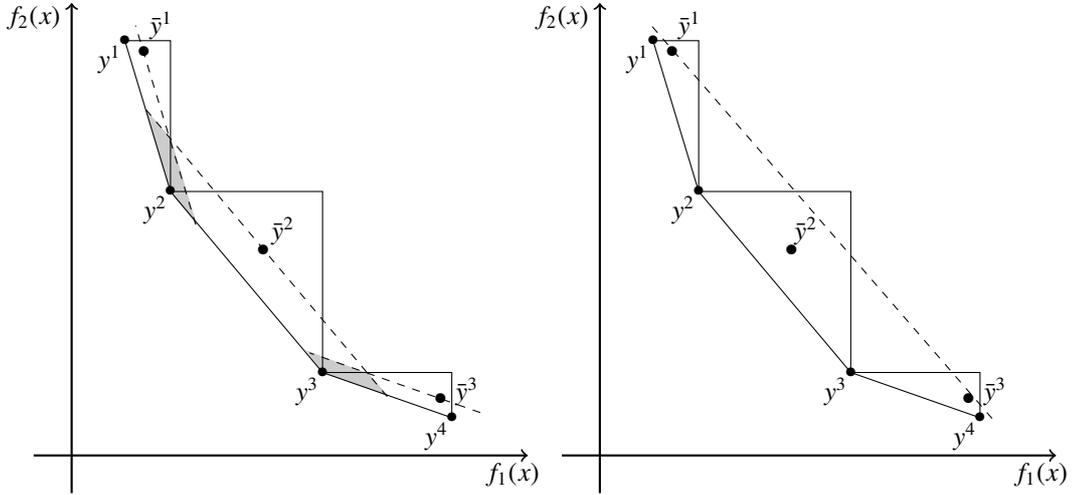

\section{Optimal grouping}
\label{sec:optGrouping}

An important aspect of this work is to develop methods for constructing an \emph{optimal} grouping that minimizes the total computational effort required for its exploration.
In this section, we show that such an optimal grouping can efficiently be constructed if the computational cost for all possible groups is known in advance.

For a given instance, we represent all possible groupings by a valued digraph $G = (V,A)$, called \emph{grouping graph}. Considering that the $m+1$ points in $Y_{NSE}$, found in the first phase, are ordered by increasing value of $f_1$, we have $V =  \{1, 2, \ldots, m+1\}$ where vertex $i$ corresponds to point $y^i \in Y_{NSE}$. Then $A = \{(i,j) \in V \times V: i<j\}$ and arc $(i,j+1) \in A$ corresponds to group $\gr{i,}{j}$ while arc $(i,i+1)$ corresponds to group $\gr{i}{}$. It follows that all feasible groupings are in one-to-one correspondence with the set of all paths from $1$ to $m+1$. 

Let $\mu^{ij}$  be the cost associated with the exploration of group $\gr{i}{j}$, 
each arc $(i,j+1)$ is then valued by $\mu^{ij}$. An optimal grouping in the sense 
of the considered cost is then obtained by computing the shortest path from $1$ to $m+1$. Observing that, by construction, $G$ is without cycles, naturally topologically ordered, and contains $\frac{(m+1)(m+2)}{2}$ arcs, the determination of an optimal path is performed very efficiently in $\mathcal{O}(m^2)$ time, 
for instance, by using the \emph{pulling} algorithm \cite{ahuja1993network}. We discuss an appropriate choice of $\mu^{ij}$ in Section \ref{sec:optRes}. 

For a given instance, this method allows us to determine the optimal grouping, offering additional insights into its structure and serving as a reference for any grouping approach.

\section{Grouping strategies}
\label{sec:grouping}

In this section, we propose heuristic strategies to form a grouping for an instance of a  \acrshort*{boco} problem. 
We first introduce \emph{grouping measures} that allow making
decisions on how to form groups (Section~\ref{sub:measures}). 
Then, we consider two main grouping strategies: i) \emph{a priori} strategies (Section~\ref{sub:apriori}), which define the grouping to be explored immediately after all the extreme supported points have been found in the first phase; 
ii) \emph{dynamic} strategies (Section~\ref{sub:dynamic}), which iteratively select the next group to be explored based on the information gathered from the current set of nondominated points found so far. 

\subsection{Grouping measures}
\label{sub:measures}

The information obtained on the location of supported points and corresponding triangles in the objective space after completing the first phase should, in principle, be used to create more effective groupings.
This section focuses on measures that allow to define groupings. We 
establish the relation of these measures with the main results 
in Section 4.

\subsubsection{Group angle}

Given a group $\gr{i}{j}$, its \emph{angle} $\theta^{ij}$ corresponds to the largest angle formed at the intersection of the lines passing through the line segments $\overline{y^iy^{i+1}}$ and $\overline{y^{j}y^{j+1}}$. Figure~\ref{fig:angleMeasure} shows an example of this measure in the special case of a group of size two (left plot) and in the general case (right plot). 

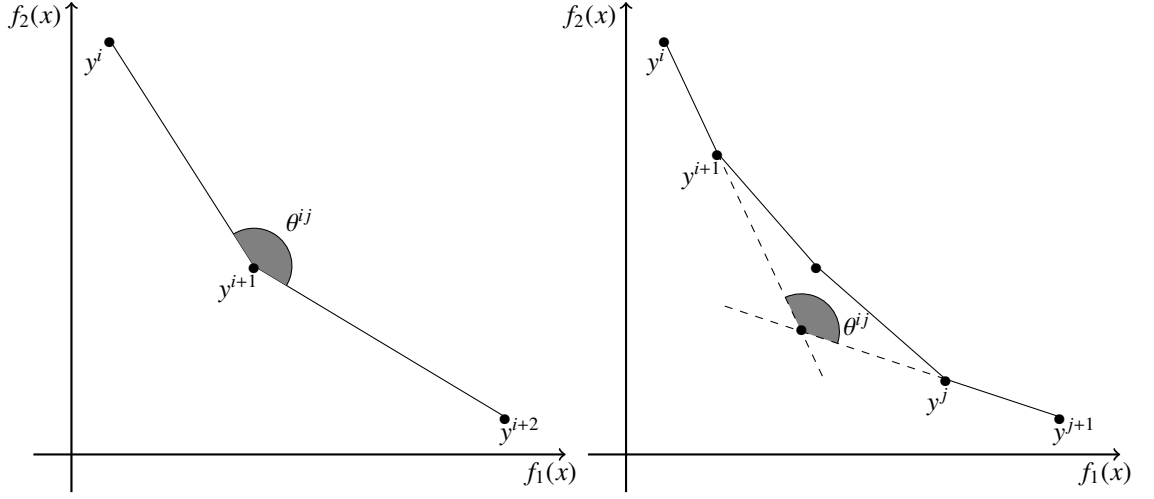
\begin{figure}[t]
    \centering
    \begin{subfigure}[t]{0.43\textwidth}
    \begin{tikzpicture}

    \draw (1,6) coordinate (A3)  -- (2.9,3) coordinate (B2)  --   (6.2,1)  coordinate (C1);

    \pic [draw, fill = gray, opacity = 0.5] {angle = C1--B2--A3};

    \draw ($(B2) + (0.6,0.6)$) node {$\theta^{ij}$};
    
    \draw[thick,->] (0,0.5) -- (7,0.5);
    \draw[thick,->] (0.5,0) -- (0.5,6.5);
    \draw ($(A3) - (0,0.05)$) node { \textbullet};
    \draw ($(B2) - (0,0.05)$) node { \textbullet};
    \draw ($(C1) - (0,0.05)$) node { \textbullet};

    \draw ($(A3) + (-0.2, -0.3)$) node { $y^{i}$};
    \draw ($(B2) + (-0.2, -0.3)$) node { $y^{i+1}$};
    \draw ($(C1) + (0.2, -0.2)$) node { $y^{i+2}$};

    \node at (6.8, 0.25) { $f_1(x)$};
    \node at (0.02, 6.3) { $f_2(x)$};
    
\end{tikzpicture}
    \end{subfigure}
    \qquad \qquad
    \begin{subfigure}[t]{0.43\textwidth}
    \begin{tikzpicture}

    \draw (1,6) coordinate (Y1)  
        -- (1.7,4.5) coordinate (Y2)  
        -- (3,3) coordinate (C)  
        -- (4.7,1.5)  coordinate (Y3) 
        -- (6.2,1)  coordinate (Y4);
    
    \coordinate (Int) at ($(Y2) + 1.58*(0.7,-1.5)$);
    
    \pic [draw, fill = gray, opacity = 0.5] {angle = Y3--Int--Y2};
    \draw ($(Int) + (0.73,0.1)$) node {$\theta^{ij}$};
    \draw (Int) node { \textbullet};

    \draw[dashed] (Y2) -- ($(Y2) + 2*(0.7,-1.5)$);
    \draw[dashed] (Y3) -- ($(Y3) + 2*(-1.5,0.5)$);

    \draw[thick,->] (0,0.5) -- (7,0.5);
    \draw[thick,->] (0.5,0) -- (0.5,6.5);
    \draw ($(Y1) - (0,0.05)$) node { \textbullet};
    \draw ($(Y2) - (0,0.05)$) node { \textbullet};
    \draw ($(Y3) - (0,0.05)$) node { \textbullet};
    \draw ($(Y4) - (0,0.05)$) node { \textbullet};
    \draw ($(C) - (0,0.05)$) node { \textbullet};

    \draw ($(Y1) + (-0.1, -0.3)$) node { $y^{i}$};
    \draw ($(Y2) + (-0.2, -0.3)$) node { $y^{i+1}$};
    \draw ($(Y3) + (-0.1, -0.3)$) node { $y^{j}$};
    \draw ($(Y4) + (0.2, -0.2)$) node { $y^{j+1}$};

    \node at (6.8, 0.25) { $f_1(x)$};
    \node at (0.02, 6.3) { $f_2(x)$};
    
\end{tikzpicture}    
    \end{subfigure}
    \caption{Example of the angle measure in groups of size 2 (left) and in the general case (right).}
    \label{fig:angleMeasure}
\end{figure}

An almost straight group angle indicates that the supported points are nearly collinear (see Proposition \ref{prop:1}). 
Therefore, a grouping strategy should prioritize groups with the largest possible
group angles. 

\subsubsection{Group ND bound}
\label{sec:ndbound}

Due to the integrality of the objective function values, the maximum number of (unsupported or non-extreme supported) nondominated points within 
triangle $\tri{i}$ can be easily determined. 
This bound, referred to as the \emph{ND bound} measure of a group, is defined as follows:

\begin{equation}    
\beta^i = \min\left\{y^{i+1}_1 - y^i_1, y^i_2 - y^{i+1}_2\right\} - 1
\label{eq:triangleBound}
\end{equation}

Note that when $\beta^i=0$, triangle $\tri{i}$ is necessarily empty. 

The value of this measure for a triangle $\tri{i}$ can be refined if some of the nondominated points within it are known. 
Let $\left\{y^i, y^{i_1}, y^{i_2}, \ldots, y^{i_r}, y^{i+1}\right\} \subseteq Y_N \cap \tri{i}$, such that 
$y_1^i < y_1^{i_1} < y_1^{i_2} < \cdots < y_1^{i_r} < y_1^{i+1}$.
We redefine the ND bound $\beta^i$ as follows
$$
\beta^i =  
\min\left\{y^{i_1}_1 - y^i_1, y^{i_1}_2 - y^i_2\right\}+
\min\left\{y^{i+1}_1 - y^{i_r}_1, y^{i+1}_2 - y^{i_r}_2\right\}+
\sum_{k=2}^r 
    \min\left\{y^{i_{k}}_1 - y^{i_{k-1}}_1, y^{i_k}_2 - y^{i_{k-1}}_2\right\} - (r + 1)$$

Finally, this measure can be extended to a group of triangles $\gr{i}{j}$:
$$
\beta^{ij} = \sum_{k: \tri{k} \in \gr{i}{j}} \beta^k$$

By definition, this measure indicates the potential of a group to contain nondominated points.

\subsection{\textit{A priori} strategy}
\label{sub:apriori}

We refer to \emph{a priori} strategies as those that construct the entire grouping before exploring its groups.
The following subsections outline two main a priori strategies implemented in this study. 
The first strategy, \emph{merge-based grouping}, iteratively combines triangles to form groups, whereas the second strategy, \emph{splitting-based grouping}, begins with all triangles in a single group and iteratively divides it until a specified condition is met.
%generates groups based on information derived from group angles.
%based on this premise.

\subsubsection{Merge-based grouping}

Within the merge-based grouping strategy, we consider
two variants: \emph{fixed size} and \emph{greedy}.
We consider $m$ triangles, $\tri{1}, \hdots, \tri{m}$, that
are defined by $m+1$ supported points found in the first
phase.

\paragraph{Fixed-size variant}

Given a parameter $s$ that defines the size of each group, this variant forms a grouping by partitioning the set of $m$ triangles into $\lfloor \frac{m}{s} \rfloor$ groups, each of which with $s$ adjacent triangles.
If $\frac{m}{s}$ is not integer, one of the groups is allowed to contain $s^\prime = m-s \cdot \lfloor \frac{m}{s} \rfloor$ triangles. In our experiments, we consider the latter group to contain the $s^\prime$ triangles defined by the 
last $s^\prime+1$ supported points with the largest $f_1$ values.

\paragraph{Greedy variant}
\label{sec:GreedyVar}
Given a parameter $t \geq 2$, which defines the maximum allowable group size, the greedy-based variant uses a specific group 
measure to iteratively merge at most $t$ triangles greedily. The variant begins by merging $t$ consecutive triangles that optimize the given group measure. The merging process is continued until no such set of consecutive $t$ triangles can be found. 
In that case, $t$ is decremented, and the greedy selection process is repeated. 
The merging continues until $t$ reaches 1. From this point, the remaining triangles can only be isolated, forming groups of size 1 and terminating the process.
The group measures to be used as the greedy criterion can be the largest group angle value or the largest ND bound value.

\subsubsection{Splitting-based grouping}
\label{sec:split}

The splitting-based approach involves sequentially selecting supported points that serve as dividers between two adjacent groups. 
For the sake of explanation, we assume that all triangles $\tri{1}, \tri{2},\ldots,\tri{m}$ form initially a single group, $\gri{}{} = \gr{1}{m}$, and that this group will be split iteratively by following the next steps:

\begin{enumerate}
    \item Obtain the initial list of candidate split points $\zeta = \{y^2, y^3, \hdots, y^{m}\}$. 
    %Note that $\zeta \subset Y_{NSE}$.
    \item Let group $\gr{i}{j}$ be a group in $\gri{}{}$ with the largest number of triangles.
    \item Select, based on a given measure, which supported point $y^{c}$ will serve as the next splitting
    %cutting 
    point, $i < c \leq j$. This point is removed from $\zeta$.
    \item Replace $\gr{i}{j}$ with $\gr{i,}{c-1}$ and $\gr{c}{j}$ in $\gr{}{}$. This is equivalent to splitting $\gr{i}{j}$ at point $y^c$.
    \item If the stopping criterion is not met, go to Step 2. Otherwise, return $\gri$.
\end{enumerate}

In Step 3, the group angle serves as an effective criterion for splitting, where the split point $y^c$ is chosen based on the smallest group angle. Due to the connection of this measure with collinearity (see Section 6.1.1), this approach ensures that the supported points within a group remain as close as possible to the collinear case. The approach terminates when every group in $\gri$ contains at most a given number of triangles or when an average group size is achieved.

\subsection{Dynamic strategy}
\label{sub:dynamic}

A dynamic strategy iteratively forms a new group using information gathered from previously explored groups.
A key ingredient in such a strategy is to use what we call \emph{coverage}, introduced by Steiner and Radzik~\cite{steiner2008computing} as a heuristic improvement. This concept, used when iteratively exploring triangles, was shown to be quite beneficial in the experiments reported by the authors. We propose an extension of this concept in two directions. The first, 
which is straightforward, applies coverage to groups rather than
to 
triangles only. The second 
uses the covering information not only for the current exploration of the group, but throughout the whole process, leading to the coverage of more triangles and a reduced exploration cost overall.

As in the original work~\cite{steiner2008computing}, a triangle $\tri{i}$ is covered if the ranking algorithm surpasses its upper bound, making further exploration of this triangle unnecessary. Otherwise, if the upper bound is not surpassed, $\tri{i}$ is considered \emph{partially covered} and must be explored in subsequent steps.
An example is shown in Figure \ref{fig:coverage}. Consider that the complete exploration of triangle $\tri{2}$ found all nondominated points with weighted-sum values lower than $w(\bar y)$. Therefore, all nondominated points within triangles $\tri{1}$ and $\tri{3}$ are also
found, and they can be ignored in further explorations. The remaining two triangles were not fully covered: $\tri{4}$ is partially covered, while $\tri{5}$ lies outside the search range. Both triangles 
would need to be further explored by the ranking algorithm. 

\begin{figure}[t]
\centering
\begin{tikzpicture}
    \coordinate (s1) at (1.3, 6.4);
    \coordinate (s2) at (1.5, 5.5);            
    \coordinate (s3) at (2.2, 3.5);
    \coordinate (s4) at (3, 2.5);
    \coordinate (s5) at (3.9, 1.6);
    \coordinate (s6) at (5.3, 0.7);
    \coordinate (disp) at (0,-0.02);

    \fill[gray!40] (s1) -- (s1 -| s2) -- (s2);
    \fill[gray!40] (s3) -- (s3 -| s4) -- (s4);
    \fill[gray!10] (s4) -- ($(s4) + (0.5,0)$) -- ($(s5) + (0,0.1)$) -- (s5);

    \fill[gray!90] (s2) -- ($(s2) + (0.5,0)$) -- ($(s3) + (0,1.59)$) -- (s3) -- cycle;
    
    \node at ($(s1) + (disp)$) {\textbullet};
    \node at ($(s2) + (disp)$)  {\textbullet};
    \node at ($(s3) + (disp)$)  {\textbullet};
    \node at ($(s4) + (disp)$)  {\textbullet};
    \node at ($(s5) + (disp)$)  {\textbullet};
    \node at ($(s6) + (disp)$)  {\textbullet};

    \node at ($(s1) - (0.2,0.5)$) {$\tri{1}$};
    \node at ($(s2) - (-0.05,0.9)$) {$\tri{2}$};
    \node at ($(s3) - (-0.15,0.65)$) {$\tri{3}$};
    \node at ($(s4) - (-0.25,0.6)$) {$\tri{4}$};
    \node at ($(s5) - (-0.7,0.75)$) {$\tri{5}$};

    %draw triangle lines
    \draw (s1) -- (s2) -- (s3) -- (s4) -- (s5) -- (s6);
    
    \draw (s1) -- (s1 -| s2) -- (s2) -- (s2 -| s3) -- (s3) -- (s3 -| s4) -- (s4) -- (s4 -| s5) -- (s5) -- (s5 -| s6) -- (s6);

    \draw[dashed] ($(s2 -| s3) - 0.5*(1.5,-3) - (-0.05, 0.5)$) -- ($(s2 -| s3) - (-0.05, 0.5) + 1.5*(1.5,-3)$);

    \node at ($(s2 -| s3) - (0.12, 0.2)$) { \textbullet};
    \node at ($(s2 -| s3) + (0.12, 0.17)$) {$\bar y$};    \node at ($(s2 -| s3) + (0.82, -1.17)$) {$w(\bar y)$};

    \draw[thick,->] (0,0.5) -- (7,0.5);
    \draw[thick,->] (0.5,0) -- (0.5,6.5);
                \node at (6.8, 0.25) { $f_1(x)$};
            \node at (0, 6.3) {$f_2(x)$};

\end{tikzpicture}
\caption{Example of covered, partially covered, and uncovered triangles}
\label{fig:coverage}
\end{figure}
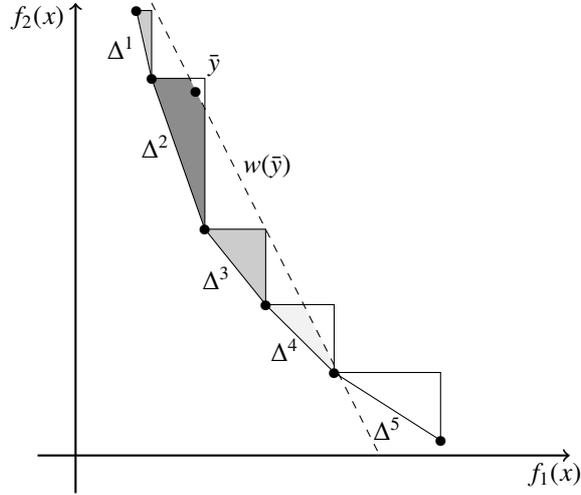

More formally, to extend this concept to groups, let us consider a triangle $\tri{i}$ that is not part of the group $\gr{j}{k}$ being explored. When finding, during this exploration, nondominated points belonging to $\tri{i}$, 
%we can update the 
set $U^i$ of its local upper bounds 
can be updated (see relation~\eqref{eq:U}). Let  $v^{jk}$ be the value
at which the ranking algorithm stopped when finishing exploring group $\gr{j}{k}$. If $w^{jk}(u) < v^{jk}$ for all $u \in U^i$, then all the nondominated points belonging to $\tri{i}$ have been found during the exploration. 

Steiner and Radzik empirically show that this additional coverage step is useful in certain cases. 
However, they also note that if the condition is not met for all local upper bounds of $U^i$, the additional effort becomes useless, as triangle $\tri{i}$ must still be explored from scratch in a further iteration. We claim that the exploration of
multiple groups nearby $\tri{i}$, without discarding the nondominated points found within, may allow this triangle to be covered.

For this purpose, let $\bar{U^i} \subset U^i$ be the set of \emph{active} local upper bounds such that $u \in \bar{U^i}$ if $w^{jk}(u) \geq v^{jk}$. The exploration of $\gr{j}{k}$ guarantees that all nondominated points in $\tri{i}$ that are upper bounded by any $u \in U^i \setminus \bar{U^i}$ have been found. Therefore, nondominated points in $\tri{i}$ can only lie in the zones upper bounded by $u \in \bar{U^i}$. 
It is possible that, when exploring another group, say $\gr{\ell}{n}$, the coverage of $\tri{i}$ can be improved or, ideally, fully covered. 
This is achieved if $w^{\ell n}(u) < v^{\ell n}$ for some or all $u \in \bar{U^i}$, where $v^{\ell n}$ is the value at which the ranking algorithm is stopped when finishing exploring $\gr{\ell}{n}$. Observe that $\bar{U^i}$ is refined each time a new nondominated point is found in $\tri{i}$ while exploring $\gr{\ell}{n}$. We refer to this approach as \emph{extended coverage}.

\section{Experimental Analysis}
\label{sec:results}

In this section, we describe an experimental analysis of our grouping strategies on the \acrfull*{bomst} problem. 
Given a graph $G = (V,E)$, and costs $(c_1(e),c_2(e)) \in \mathbb Z^2_>$ for each edge $e \in E$, the goal in the~\acrshort*{bomst} problem is to find the set $Y_N$ of all nondominated points, each corresponding to an efficient spanning tree.
While $Y_N$ is known to have exponential size~\cite{hamacher1994spanning}, the set $Y_{NSE}$ is polynomially bounded~\cite{Seipp2013}.
An in-depth experimental analysis of current solution approaches to the~\acrshort*{bomst} problem can be found in~\cite{fernandes2020empirical}. 
The experimental results obtained by the authors indicate that the two-phase method incorporating a ranking algorithm in the second phase, as proposed in \cite{steiner2008computing}, is among the best-performing approaches for most of the instances.

\subsection{Experimental setup}
\label{sec:expSetup}

In our experimental analysis, we consider complete graphs. 
The generation of each edge cost follows the procedure described in~\cite{verel2013structure}, which allows for varying the degree of conflict between the two objectives by a certain correlation factor defined a priori. 
Each type of instance is then characterized by the three following parameters:
\begin{itemize}
    \item Size ($n$): The number of nodes in each graph. 
    %Increasing this value increases exponentially the number of decision variables\dv{??? to be removed?}.
    Values for $n$ are set to $50$, $100$, and $150$.
    \item Range ($r$): The range of the cost of the edges.
    We only vary the upper limit, and we fix the lower 
    limit to 1.    
    The considered values of $r$ are $10^2$, $10^3$ and $10^4$.
    \item Correlation factor ($\rho$): the values of the correlation matrix as in~\cite{verel2013structure}, used to generate costs on the edges. 
    The chosen values for $\rho$ are $0.8$, $0$, and $-0.8$ reflecting increasing degrees of conflict between the two objectives.
\end{itemize}

There are $27$ possible combinations for the specified values on each parameter, with 10 instances generated for each combination.
Different sizes and correlation values are oftentimes considered in the literature, but the range is fixed \cite{fernandes2020empirical, sourd2008multiobjective,steiner2008computing}. 
However, the maximum value for edge costs seems to have a significant impact on the performance 
of different grouping strategies, as reported in our experimental
analysis.

The first phase was performed with a dichotomic approach which solves $2\cdot |Y_{NSE}|-1$ instances of the single objective minimum spanning tree problem. 
Moreover, this phase is common to all strategies and is not considered in our performance comparisons.
We used the ranking algorithm proposed by Gabow \cite{gabow1977two}, which takes $O(k \cdot |E| \cdot \alpha(|E|,|V|)+|E| \log |E|)$ time to enumerate the $k$ best spanning trees where $\alpha$ is the inverse Ackermann function.
We consider that the number of visited solutions is an appropriate measure of the computational effort. This measure is more robust than the CPU time, since it is invariant over different machine setups and not subject to external interference. Moreover, it is strongly correlated with the running time since most of the computation time is spent on the enumeration of solutions and, for a given instance, the time to enumerate a new solution in the ranking is approximately constant.

The experiments were run in a cluster with 2 Intel Xeon Silver 4210R 2.4 GHz, 20 total cores, 2 threads each, 256 GB RAM, with operating system Debian GNU/Linux 10 6.1.0-32-amd64. 
The implementations were coded in {\tt{C++}} and compiled with {\tt{g++}} 9.4.0 with {\tt{-O3}} compilation flag. Instances are publicly available at \url{https://github.com/FlipM/BOMST_Benchmark}.

\subsection{Analysis of optimal groupings}
\label{sec:optRes}

In this section, we provide an analysis of the optimal groupings for the \acrshort*{bomst} instances using the grouping graph introduced in Section~\ref{sec:grouping}. 
To this end, we value each arc of the grouping graph with the cost of exploring the corresponding group, which, in our case, is the number of solutions visited by the ranking algorithm in order to guarantee that all nondominated points belonging to this group have been found. 

Computing these values for all arcs is very costly. 
We show that it is unnecessary to compute all these values to obtain the optimal grouping. For this purpose, we first create all arcs of type $(i,i+1)$, $i=1,..,m$, which correspond to all groups of size 1, and value them by applying the ranking algorithm. We then consider the generation and evaluation of all arcs of type $(i,i+2)$, $i=1,..,m-1$. For each such arc $(i,j)$, we first compute the shortest path value $u_{ij}$ using the already known and valued arcs. Then, when applying the ranking algorithm to the corresponding group $\gr{i,}{j-1}$, we create arc $(i,j)$, valued by $v_{ij}$ if it enumerates $v_{ij} < u_{ij}$ solutions. Otherwise,  the enumeration stops as soon as $v_{ij} = u_{ij}$, and arc $(i,j)$ is not created. We continue similarly for arcs $(i,i+3),\ldots,(i,i+m)$.

A second simplification is to set a limit on the size of each group. Preliminary experiments reveal that, while grouping a few triangles is often quite beneficial, grouping many triangles leads to prohibitive computational efforts. It is then possible to define a maximum number $\tau$ of triangles that could belong to a group (our experiments show that the largest groups very rarely contain more than 7 triangles). This has the double advantage of defining a reduced grouping graph $G_{\tau} = (V,A_{\tau})$ with $A_{\tau} = \{(i,j) \in V \times V: i<j \mbox{ where } j-i \leq \tau \}$ making unnecessary the evaluation of the measure on arcs from $A \setminus A_{\tau}$ and improving the determination of an optimal path which is performed now in $\mathcal{O}(m)$ time.

%\begin{landscape}
\begin{sidewaystable}
\centering
%\begin{table}
    % Please add the following required packages to your document preamble:
% \usepackage{multirow}
\setlength{\tabcolsep}{4pt}
\begin{tabular}{rrrr|rrrrr|rrrrrrrr}
\multicolumn{4}{c|}{Instances}                                 & \multicolumn{5}{c|}{Results}                           & \multicolumn{8}{c}{Group size}                        \\
$r$                     & $\rho$                & $n$ & \#s & $|Y_{NSE}|$ & $|Y_N|^B$ & $|Y_N|$  & Optimal   & F1    & 1    & 2    & 3    & 4    & 5   & 6   & 7   & Average \\ \hline
\multirow{9}{*}{$10^2$} & \multirow{3}{*}{0.8}  & 50  & 10     & 28.0        & 94.6      & 86.8     & 328.6     & 1.287 & 6.2  & 4.2  & 2.2  & 0.2  & 0.0 & 0.0 & 0.0 & 1.72    \\
                        &                       & 100 & 10     & 34.7        & 118.4     & 118.0    & 2 229.8   & 1.062 & 18.6 & 2.6  & 0.7  & 0.0  & 0.1 & 0.1 & 0.0 & 1.22    \\
                        &                       & 150 & 9      & 39.8        & 145.0     & 139.9    & 168 734.8 & 1.059 & 24.4 & 1.4  & 0.4  & 0.1  & 0.0 & 0.1 & 0.0 & 1.11    \\ \cline{2-17} 
                        & \multirow{3}{*}{0.0}  & 50  & 10     & 100.3       & 762.8     & 648.9    & 2 815.4   & 1.639 & 7.8  & 16.8 & 12.4 & 3.3  & 0.4 & 0.0 & 0.0 & 2.30    \\
                        &                       & 100 & 10     & 185.4       & 1 160.3   & 1 118.0  & 6 954.5   & 1.450 & 28.6 & 34.1 & 19.6 & 3.1  & 0.6 & 0.4 & 0.1 & 2.01    \\
                        &                       & 150 & 10     & 250.7       & 1 499.7   & 1 491.4  & 27 398.6  & 1.301 & 78.0 & 35.3 & 22.2 & 2.5  & 0.3 & 0.7 & 0.9 & 1.70    \\ \cline{2-17} 
                        & \multirow{3}{*}{-0.8} & 50  & 10     & 154.5       & 2 690.8   & 2 246.9  & 19 491.1  & 1.606 & 14.6 & 27.8 & 17.8 & 5.7  & 0.5 & 0.0 & 0.0 & 2.24    \\
                        &                       & 100 & 9      & 309.5       & 5 037.7   & 4 824.3  & 235 384.3 & 1.391 & 83.1 & 45.6 & 34.4 & 5.4  & 0.7 & 0.0 & 0.0 & 1.79    \\
                        &                       & 150 & 0      & 425.2       & 7 298.6   & -        & -         & -     & -    & -    & -    & -    & -   & -   & -   & -       \\ \hline
\multirow{9}{*}{$10^3$} & \multirow{3}{*}{0.8}  & 50  & 10     & 37.9        & 934.0     & 247.5    & 1 255.7   & 1.673 & 1.9  & 4.8  & 4.4  & 2.0  & 0.6 & 0.1 & 0.0 & 2.63    \\
                        &                       & 100 & 10     & 87.5        & 1 320.2   & 781.8    & 4 518.5   & 1.647 & 3.6  & 11.1 & 11.7 & 4.8  & 0.7 & 0.1 & 0.0 & 2.63    \\
                        &                       & 150 & 10     & 132.1       & 1 507.8   & 1 218.8  & 7 646.6   & 1.597 & 7.6  & 18.9 & 17.4 & 6.0  & 1.1 & 0.2 & 0.0 & 2.51    \\ \cline{2-17} 
                        & \multirow{3}{*}{0.0}  & 50  & 10     & 113.9       & 7 345.3   & 1 469.5  & 8 662.2   & 1.972 & 2.9  & 13.5 & 15.2 & 7.0  & 1.5 & 0.2 & 0.0 & 2.78    \\
                        &                       & 100 & 10     & 267.1       & 11 840.3  & 5 040.2  & 32 257.5  & 1.981 & 5.4  & 29.7 & 40.4 & 15.5 & 3.0 & 0.1 & 0.0 & 2.80    \\
                        &                       & 150 & 10     & 444.0       & 15 017.6  & 9 217.1  & 59 150.1  & 2.051 & 10.3 & 47.6 & 59.5 & 28.4 & 6.9 & 0.8 & 0.0 & 2.85    \\ \cline{2-17} 
                        & \multirow{3}{*}{-0.8} & 50  & 10     & 184.0       & 26 716.1  & 6 246.5  & 57 694.4  & 1.901 & 6.2  & 20.6 & 24.7 & 11.6 & 2.3 & 0.3 & 0.2 & 2.77    \\
                        &                       & 100 & 10     & 445.0       & 50 604.4  & 21 836.4 & 239 891.8 & 1.921 & 12.4 & 47.8 & 61.4 & 28.4 & 6.4 & 0.5 & 0.2 & 2.81    \\
                        &                       & 150 & 0      & 712.7       & 73 383.8  & -      & -       & -     & -    & -    & -    & -    & -   & -   & -   & -       \\ \hline
\multirow{9}{*}{$10^4$} & \multirow{3}{*}{0.8}  & 50  & 10     & 37.8        & 9 300.3   & 256.6    & 1 425.1   & 1.749 & 1.2  & 5.7  & 4.1  & 2.2  & 0.5 & 0.1 & 0.0 & 2.67    \\
                        &                       & 100 & 10     & 95.5        & 13 429.3  & 1 443.5  & 10 520.8  & 1.841 & 2.3  & 11.7 & 10.8 & 6.3  & 1.9 & 0.2 & 0.0 & 2.83    \\
                        &                       & 150 & 10     & 150.7       & 14 728.5  & 3 036.2  & 19 521.4  & 1.835 & 3.0  & 15.7 & 19.0 & 9.6  & 3.2 & 0.3 & 0.2 & 2.92    \\ \cline{2-17} 
                        & \multirow{3}{*}{0.0}  & 50  & 10     & 113.6       & 83 511.4  & 1 773.5  & 11 398.0  & 2.025 & 2.4  & 11.8 & 15.8 & 7.4  & 1.4 & 0.2 & 0.2 & 2.87    \\
                        &                       & 100 & 10     & 276.7       & 119 201.1 & 7 486.4  & 57 365.2  & 2.073 & 3.6  & 29.5 & 36.3 & 19.9 & 4.3 & 0.4 & 0.0 & 2.93    \\
                        &                       & 150 & 10     & 453.6       & 148 869.5 & 15 893.8 & 134 437.2 & 2.101 & 6.2  & 37.6 & 65.9 & 34.6 & 6.9 & 0.0 & 0.0 & 2.99    \\ \cline{2-17} 
                        & \multirow{3}{*}{-0.8} & 50  & 10     & 183.9       & 269 951.9 & 7 845.0  & 77 565.7  & 1.929 & 4.4  & 21.3 & 25.0 & 12.5 & 1.9 & 0.2 & 0.0 & 2.80    \\
                        &                       & 100 & 10     & 451.4       & 506 843.1 & 32 116.6 & 334 789.5 & 2.009 & 7.4  & 41.6 & 60.0 & 33.4 & 8.9 & 0.2 & 0.0 & 2.97    \\
                        &                       & 150 & 0      & 743.7       & 732 904.0 & -        & -         & -     & -    & -    & -    & -    & -   & -   & -   & -      
\end{tabular}
    \caption{Results for optimal grouping.}
    \label{table:newOptF1}
%\end{table}
\end{sidewaystable}
%\end{landscape}

Table~\ref{table:newOptF1} shows the results for the $270$ instances, grouped by size, correlation, and range. The \textit{Instances} column group refers to the instance parameters mentioned in Section~\ref{sec:expSetup} plus column \emph{\#s}, which shows the number of instances in which an optimal grouping was found within a $5$-hour time limit. In particular, no optimal grouping was found for any of the instances with $n=150$ and $\rho = -0.8$. In those cases, we only report information from the first phase and ignore those instances in the subsequent experiments.

The \emph{Results} column group reports, for each type of instance, the number of nondominated points (column \emph{$|Y_N|$}) and supported extreme points (column {$|Y_{NSE}|$}), averaged over the solved instances.
Column \emph{$|Y_N|^B$} represents the average upper bound on the number of nondominated points,  
considering the supported extreme points from the first phase. It is calculated as $|Y_N|^B = |Y_{NSE}| + \sum_{i = 1}^m \beta^i$, where $\beta^i$ is defined in Eq.~\eqref{eq:triangleBound}.
Column \emph{Optimal} shows the average number of enumerated solutions by the optimal grouping.

To evaluate the performance of each strategy, we first define the \emph{effectiveness ratio}  for each instance, calculated as the ratio
of the number of enumerated solutions using the grouping strategy to the number obtained with the optimal grouping.
Each row of column \emph{F1} shows the harmonic mean of the effectiveness ratios using the baseline strategy, which explores each triangle individually. Note that the harmonic mean is calculated over the solved instances for each corresponding combination of parameters.
Information on the group sizes of optimal groupings is also detailed in Table \ref{table:newOptF1}. 
Each column, labeled from \emph{1} to \emph{7}, provides the average number of groups of the corresponding size across the solved instances for each type of instance.
%For reference, the two largest group sizes are highlighted in bold. \dv{This is not the case in the table!! Maybe it's not needed!} 
Column \emph{Average} shows the average group size for the optimal groupings.

Table~\ref{table:newOptF1} indicates that the $F1$ approach generally explores nearly twice as many solutions as the optimal grouping, except when the range is small, where the ratio is significantly lower. Additionally, the optimal groupings tend to favor groups of size 2 and 3, except for instances with small ranges, where exploring single triangles appears more effective. Moreover, the performance of the $F1$ approach appears largely unaffected by the instance size or the correlation between objectives.

These experimental results suggest that the range parameter significantly influences the sizes of the groups in the optimal groupings. When the range is small, there are few groups with a size of 2 or more, suggesting that grouping may have limited effectiveness. Table \ref{table:newOptF1} shows that the bound $|Y_N|^B$ is very tight for instances with small ranges ($|Y_N|$ is at most $20\%$ lower than the bound), suggesting that the number of nondominated points within each triangle $\tri{i}$ is very close to $\beta^i$. As a result, these nondominated points are likely to be concentrated near the boundary $bd(\mathcal C_N)$. Consequently, the likelihood of revisiting solutions during the exploration of different triangles is lower. For such cases, grouping should be unnecessary.

Conversely, grouping can consistently reduce the computational effort by approximately half for larger ranges. In these cases, the bound $|Y_N|^B$ is less tight, indicating that nondominated points are likely to be farther away from $bd(\mathcal C_N)$. This increases the likelihood of revisiting solutions when exploring different triangles, justifying the need for grouping to minimize redundancy.

This analysis is also reflected in the group size columns. Optimal groupings in small-range instances rarely require groups of size 2 or more. In instances with larger ranges, isolated triangles are less used than groups of sizes 2 and 3. Groups of size 5 or more are seldom used in optimal groupings, and groups of size 7 are 
nearly never used.
Thus, the optimal average group size oscillates between 1 and 3, with a large majority of cases being above 2. These values will be useful to build the strategies presented in the next section.

\subsection{Analysis of a priori strategies}
\label{sec:AprioriRes}

This section reports information about the a priori strategies that performed better in our study. 
We divide the proposed strategies into categories based on variants presented in Section \ref{sub:apriori}:

\begin{itemize}
    \item F1-F4: Fixed grouping strategy with the group size corresponding to the number after `F'. Note that F1 corresponds to the same F1
    in Table \ref{table:newOptF1}, that is,
    no grouping is considered.
    \item SA2.0 and SA2.5:  Splitting strategies that choose the supported point with the smallest group angle as the split point (see Section~\ref{sec:split}). The suffixes $2.0$ and $2.5$ represent 
    the average group size used as the stopping criterion. The values are chosen based on the \emph{Average} column in Table \ref{table:newOptF1}.
    \item GA2/3 and GN2/3: Greedy variant as defined in Section \ref{sec:GreedyVar},
    using group angle (prefix GA) and ND bound (prefix GN). In each iteration, adjacent triangles that maximize the selected measure are merged into a group of size $t$. The suffix (2 or 3) denotes the value of the parameter $t$.
\end{itemize}                       

The performance of each strategy over the instance benchmark is shown in Table~\ref{tab:newTabAprior}. Column \emph{Optimal} presents the average number of solutions enumerated by the ranking algorithm considering the optimal grouping (as shown in Table~\ref{table:newOptF1}).  
Column \textit{mean} shows the harmonic mean of the effectiveness ratio, while \textit{\#s}
indicates the number of solved instances for the corresponding problem type.
For each type of instance, the lowest mean of the effectiveness ratio is
highlighted in bold.

%\begin{landscape} \begin{table}
\begin{sidewaystable}

    \centering
    % Please add the following required packages to your document preamble:
% \usepackage{multirow}
\setlength{\tabcolsep}{4pt}
\begin{tabular}{rrrr|rr rr rr rr|rr rr|rr rr rr rr}
\multicolumn{3}{r}{Instances}                                  & \multirow{2}{*}{Optimal} & \multicolumn{2}{c}{F1} & \multicolumn{2}{c}{F2} & \multicolumn{2}{c}{F3} & \multicolumn{2}{c|}{F4} & \multicolumn{2}{c}{SA2.0} & \multicolumn{2}{c|}{SA2.5} & \multicolumn{2}{c}{GA2} & \multicolumn{2}{c}{GA3} & \multicolumn{2}{c}{GN2} & \multicolumn{2}{c}{GN3} \\
r                       & $\rho$                & n   &                          & mean             & \#s  & mean             & \#s  & mean             & \#s  & mean          & \#s      & mean              & \#s    & mean               & \#s    & mean             & \#s   & mean             & \#s   & mean          & \#s      & mean         & \#s       \\ \hline
\multirow{8}{*}{$10^2$} & \multirow{3}{*}{0.8}  & 50  & 328.6                    & 1.287            & 10  & 1.353            & 10  & 2.357            & 10  & 5.533         & 10      & 1.475             & 10    & 2.063              & 10    & \textbf{1.150}   & 10   & 1.697            & 10   & 1.493         & 10      & 2.322        & 10       \\
                        &                       & 100 & 2 229.8                  & \textbf{1.062}   & 10  & 3.944            & 10  & 17.343           & 8   & 157.294       & 6       & 6.005             & 10    & 20.982             & 9     & 3.012            & 10   & 12.011           & 9    & 5.872         & 10      & 58.252       & 8        \\
                        &                       & 150 & 168 734.8                & \textbf{1.059}   & 9   & 16.005           & 1   & -              & 0   & -           & 0       & 7.036             & 4     & -                & 0     & 7.797            & 3    & 21.874           & 1    & 125.603       & 1       & -            & 0        \\ \cline{2-24} 
                        & \multirow{3}{*}{0.0}  & 50  & 2 815.4                  & 1.639            & 10  & \textbf{1.248}   & 10  & 1.530            & 10  & 2.547         & 10      & 1.294             & 10    & 1.429              & 10    & 1.274            & 10   & 1.328            & 10   & 1.304         & 10      & 1.673        & 10       \\
                        &                       & 100 & 6 954.5                  & 1.450            & 10  & 1.403            & 10  & 2.465            & 10  & 5.765         & 10      & 1.464             & 10    & 2.580              & 10    & \textbf{1.335}   & 10   & 1.924            & 10   & 1.477         & 10      & 2.658        & 10       \\
                        &                       & 150 & 27 398.6                 & \textbf{1.301}   & 10  & 2.445            & 10  & 9.164            & 9   & -           & 0       & 1.679             & 10    & 3.410              & 10    & 1.899            & 10   & 5.594            & 9    & 2.910         & 10      & 15.848       & 6        \\ \cline{2-24} 
                        & \multirow{2}{*}{-0.8} & 50  & 19 491.1                 & 1.606            & 10  & 1.306            & 10  & 1.731            & 10  & 3.580         & 10      & 1.369             & 10    & 1.699              & 10    & \textbf{1.274}   & 10   & 1.652            & 10   & 1.389         & 10      & 2.247        & 10       \\
                        &                       & 100 & 235 384.3                & \textbf{1.391}   & 9   & 2.011            & 8   & 7.062            & 3   & 12.886        & 1       & 2.144             & 9     & 3.758              & 5     & 1.772            & 9    & 2.804            & 7    & 2.277         & 8       & 6.863        & 3        \\ \hline
\multirow{8}{*}{$10^3$} & \multirow{3}{*}{0.8}  & 50  & 1 255.7                  & 1.673            & 10  & 1.259            & 10  & 1.372            & 10  & 2.131         & 10      & 1.230             & 10    & 1.248              & 10    & 1.273            & 10   & \textbf{1.214}   & 10   & 1.326         & 10      & 1.436        & 10       \\
                        &                       & 100 & 4 518.5                  & 1.647            & 10  & \textbf{1.245}   & 10  & 1.299            & 10  & 1.941         & 10      & 1.250             & 10    & 1.373              & 10    & 1.254            & 10   & 1.281            & 10   & 1.305         & 10      & 1.569        & 10       \\
                        &                       & 150 & 7 646.6                  & 1.597            & 10  & 1.240            & 10  & 1.463            & 10  & 2.351         & 10      & \textbf{1.228}    & 10    & 1.334              & 10    & 1.263            & 10   & 1.301            & 10   & 1.292         & 10      & 1.717        & 10       \\ \cline{2-24} 
                        & \multirow{3}{*}{0.0}  & 50  & 8 662.2                  & 1.972            & 10  & 1.290            & 10  & \textbf{1.243}   & 10  & 1.820         & 10      & 1.351             & 10    & 1.364              & 10    & 1.382            & 10   & 1.288            & 10   & 1.380         & 10      & 1.456        & 10       \\
                        &                       & 100 & 32 257.5                 & 1.981            & 10  & 1.315            & 10  & \textbf{1.282}   & 10  & 1.644         & 10      & 1.357             & 10    & 1.361              & 10    & 1.393            & 10   & 1.289            & 10   & 1.396         & 10      & 1.441        & 10       \\
                        &                       & 150 & 59 150.1                 & 2.051            & 10  & 1.363            & 10  & \textbf{1.306}   & 10  & 1.675         & 10      & 1.418             & 10    & 1.421              & 10    & 1.466            & 10   & 1.358            & 10   & 1.436         & 10      & 1.467        & 10       \\ \cline{2-24} 
                        & \multirow{2}{*}{-0.8} & 50  & 57 694.4                 & 1.901            & 10  & 1.327            & 10  & 1.336            & 10  & 1.830         & 10      & 1.318    & 10    & 1.322              & 10    & 1.362            & 10   & \textbf{1.305}            & 10   & 1.384         & 10      & 1.448        & 10       \\
                        &                       & 100 & 239 891.8                & 1.921            & 10  & \textbf{1.332}   & 10  & 1.353            & 10  & 1.802         & 10      & 1.359             & 10    & 1.390              & 10    & 1.390            & 10   & 1.343            & 10   & 1.408         & 10      & 1.517        & 10       \\ \hline
\multirow{8}{*}{$10^4$} & \multirow{3}{*}{0.8}  & 50  & 1 425.1                  & 1.749            & 10  & 1.238            & 10  & 1.323            & 10  & 1.917         & 10      & 1.213             & 10    & \textbf{1.195}     & 10    & 1.252            & 10   & 1.210            & 10   & 1.298         & 10      & 1.444        & 10       \\
                        &                       & 100 & 10 520.8                 & 1.841            & 10  & 1.302            & 10  & 1.335            & 10  & 1.812         & 10      & 1.295             & 10    & 1.291              & 10    & 1.335            & 10   & \textbf{1.245}   & 10   & 1.371         & 10      & 1.503        & 10       \\
                        &                       & 150 & 19 521.4                 & 1.835            & 10  & 1.276            & 10  & 1.302            & 10  & 1.871         & 10      & 1.252             & 10    & 1.244              & 10    & 1.303            & 10   & \textbf{1.211}   & 10   & 1.374         & 10      & 1.467        & 10       \\ \cline{2-24} 
                        & \multirow{3}{*}{0.0}  & 50  & 11 398.0                 & 2.025            & 10  & 1.339            & 10  & \textbf{1.256}   & 10  & 1.681         & 10      & 1.372             & 10    & 1.349              & 10    & 1.386            & 10   & 1.301            & 10   & 1.414         & 10      & 1.422        & 10       \\
                        &                       & 100 & 57 365.2                 & 2.073            & 10  & 1.320            & 10  & \textbf{1.239}   & 10  & 1.542         & 10      & 1.366             & 10    & 1.339              & 10    & 1.431            & 10   & 1.287            & 10   & 1.427         & 10      & 1.402        & 10       \\
                        &                       & 150 & 134 437.2                & 2.101            & 10  & 1.353            & 10  & \textbf{1.260}   & 10  & 1.500         & 10      & 1.382             & 10    & 1.366              & 10    & 1.467            & 10   & 1.284            & 10   & 1.453         & 10      & 1.377        & 10       \\ \cline{2-24} 
                        & \multirow{2}{*}{-0.8} & 50  & 77 565.7                 & 1.929            & 10  & 1.293            & 10  & 1.284            & 10  & 1.777         & 10      & 1.329             & 10    & 1.360              & 10    & 1.372            & 10   & \textbf{1.278}   & 10   & 1.384         & 10      & 1.446        & 10       \\
                        &                       & 100 & 334 789.5                & 2.009            & 10  & 1.326            & 10  & \textbf{1.273}   & 10  & 1.643         & 10      & 1.355             & 10    & 1.347              & 10    & 1.408            & 10   & 1.298            & 10   & 1.433         & 10      & 1.454        & 10      
\end{tabular}
    \caption{Results for the a priori strategies.}
    \label{tab:newTabAprior}
\end{sidewaystable}
%\end{table}\end{landscape}

We divide the analysis of the results in 
Table~\ref{tab:newTabAprior} in two parts:
the first for small-range instances ($r = 10^2$), and the second for instances with $r \geq 10^3$. Recall that, for the former, the nondominated points typically lie close to the boundary of $\mathcal C_N$, and are nearly supported (see Section
\ref{sec:optRes}). As a result, grouping
may bring limited efficiency.

For small-range instances, fixed methods with a size greater than one perform poorly, with 
results deteriorating as the group size increases. 
Although generally less effective than F1
in terms of effectiveness ratio, measure-based strategies solve more instances and exhibit a higher effectiveness ratio than F2, F3, and F4. Notably, GA2 outperforms F1 in three instance types and times out on only 8 out of the 80 instances, making it the second-best grouping strategy overall.
Among the strategies that allow groups of size $3$, only SA2.0 performed reasonably well, with $7$ timeouts.

For the remaining instances ($r \geq 10^3$), F2 and F3 are highly effective.
Compared to F1, F3 can reduce the computational effort by between $9\%$ and $40\%$, while F2 achieves reductions ranging from 22\% to 36\%. A decline in performance is noticeable with F4, which is worse than F1 in more than half of the instances and, when it does improve, it never exceeds $26\%$.
We tested fixed-size variants with groups of larger sizes, but due to their poor performance and frequent timeouts, the corresponding results are not included in this paper.

Both splitting strategies performed well, consistently improving upon F1 from $16\%$ to $35\%$. However, no clear advantage can be established between SA2.0 and SA2.5. Among the greedy strategies, GA3 achieved the best average performance in $5$ instance types, and the second-best in $9$ others. Its improvements over F1 range from $18\%$ to $38\%$. GA2 yielded a slightly lower performance. Both GA2 and GA3 are competitive with F2 and F3 and outperform the splitting variant. In contrast, both GN2 and GN3 variants performed poorly.

\subsection{Analysis of dynamic strategies}

We consider two dynamic variants that apply coverage to single triangles only:

\begin{itemize}
    \item SRKB4: The \emph{KB4} approach proposed by Steiner and Radzik~\cite{steiner2008computing}, 
    which was reported by the authors to be the best-performing two-phase method using ranking algorithms. 
    This approach uses simple coverage as introduced in Section~\ref{sub:dynamic}. 
    The authors suggest selecting the next triangle based on its \emph{size}. 
    In our study, we prioritize the triangle with the largest ND bound.
    \item ECU: The \emph{Extended Coverage}
    version of SRKB4, as detailed in Section~\ref{sub:dynamic}. 
    The ND bound of partially covered triangles is updated, and the exploration ordering of triangles is adjusted accordingly.
\end{itemize}

We also consider extended coverage applied to groups of triangles defined by two greedy variants (see Section \ref{sec:GreedyVar}):

\begin{itemize}
    \item GAEC2/3, which iteratively combines variant GA2/3 with extended coverage.
    \item GNECU2/3, which iteratively combines variant GN2/3 with 
    extended coverage and 
    update of ND bounds.
\end{itemize}

The results of the dynamic strategies are shown in Table \ref{tab:newTabDynamic}. Similarly to Table~\ref{tab:newTabAprior}, the number of solved instances and harmonic means of effectiveness ratios with respect to the optimal grouping are presented for each dynamic variant. For reference, we also report  the results for F1 and the best
results obtained with the best-performing 
a priori strategy for each instance type
(column \emph{Best AP}).

%\begin{landscape} \begin{table}
\begin{sidewaystable}
    \centering
    % Please add the following required packages to your document preamble:
% \usepackage{multirow}
\setlength{\tabcolsep}{4pt}
\begin{tabular}{rrrr|rrrr|rrrr|rrrrrrrr}
\multicolumn{3}{c}{Instance}                                                                & \multicolumn{1}{c|}{\multirow{2}{*}{Optimal}} & \multicolumn{2}{c}{F1} & \multicolumn{2}{c|}{Best AP} & \multicolumn{2}{c}{SRKB4} & \multicolumn{2}{c|}{ECU} & \multicolumn{2}{c}{GAEC2} & \multicolumn{2}{c}{GAEC3} & \multicolumn{2}{c}{GNECU2} & \multicolumn{2}{c}{GNECU3} \\
r                                      & \multicolumn{1}{l}{$\rho$} & \multicolumn{1}{l}{n} & \multicolumn{1}{c|}{}                         & mean        & \#s       & mean           & \#s          & mean              & \#s    & mean              & \#s   & mean              & \#s    & mean              & \#s    & mean               & \#s    & mean           & \#s        \\ \hline
\multirow{8}{*}{10\textasciicircum{}2} & \multirow{3}{*}{0.8}       & 50                    & 328.6                                         & 1.287       & 10       & 1.150          & 10          & \textbf{1.130}    & 10    & 1.132             & 10   & 1.213             & 10    & 1.660             & 10    & 1.449              & 10    & 2.312          & 10        \\
                                       &                            & 100                   & 2 229.8                                       & 1.062       & 10       & 1.062          & 10          & 1.019             & 10    & \textbf{1.015}    & 10   & 2.425             & 10    & 11.732            & 9     & 5.859              & 10    & 58.246         & 8         \\
                                       &                            & 150                   & 168 734.8                                     & 1.059       & 9        & 1.059          & 9           & \textbf{1.024}    & 9     & 1.025             & 9    & 7.790             & 3     & 21.874            & 1     & 125.595            & 1     & -              & 0         \\ \cline{2-20} 
                                       & \multirow{3}{*}{0.0}       & 50                    & 2 815.4                                       & 1.639       & 10       & 1.248          & 10          & 1.281             & 10    & 1.248             & 10   & \textbf{1.200}             & 10    & 1.303             & 10    & 1.220     & 10    & 1.612          & 10        \\
                                       &                            & 100                   & 6 954.5                                       & 1.450       & 10       & 1.335          & 10          & 1.206             & 10    & \textbf{1.173}    & 10   & 1.290             & 10    & 1.901             & 10    & 1.398              & 10    & 2.634          & 10        \\
                                       &                            & 150                   & 27 398.6                                      & 1.301       & 10       & 1.301          & 10          & 1.164             & 10    & \textbf{1.155}    & 10   & 1.717             & 10    & 5.826             & 10    & 2.873              & 10    & 15.838         & 6         \\ \cline{2-20} 
                                       & \multirow{2}{*}{-0.8}      & 50                    & 19 491.1                                      & 1.606       & 10       & 1.274          & 10          & 1.301             & 10    & 1.272             & 10   & \textbf{1.197}    & 10    & 1.605             & 10    & 1.278              & 10    & 2.165          & 10        \\
                                       &                            & 100                   & 235 384.3                                     & 1.391       & 9        & 1.391          & 9           & 1.231             & 9     & \textbf{1.217}    & 9    & 1.672             & 9     & 3.257             & 9     & 2.235              & 8     & 7.389          & 8         \\ \hline
\multirow{8}{*}{10\textasciicircum{}3} & \multirow{3}{*}{0.8}       & 50                    & 1 255.7                                       & 1.673       & 10       & 1.214          & 10          & 1.221             & 10    & 1.199             & 10   & \textbf{1.155}    & 10    & 1.188             & 10    & 1.195              & 10    & 1.367          & 10        \\
                                       &                            & 100                   & 4 518.5                                       & 1.647       & 10       & 1.245          & 10          & 1.215             & 10    & \textbf{1.172}    & 10   & 1.175             & 10    & 1.232             & 10    & 1.175              & 10    & 1.503          & 10        \\
                                       &                            & 150                   & 7 646.6                                       & 1.597       & 10       & 1.228          & 10          & 1.206             & 10    & 1.180             & 10   & \textbf{1.151}    & 10    & 1.270             & 10    & 1.193              & 10    & 1.678          & 10        \\ \cline{2-20} 
                                       & \multirow{3}{*}{0.0}       & 50                    & 8 662.2                                       & 1.972       & 10       & 1.243          & 10          & 1.407             & 10    & 1.394             & 10   & 1.257             & 10    & 1.242             & 10    & \textbf{1.203}     & 10    & 1.351          & 10        \\
                                       &                            & 100                   & 32 257.5                                      & 1.981       & 10       & 1.282          & 10          & 1.374             & 10    & 1.347             & 10   & 1.262             & 10    & 1.243             & 10    & \textbf{1.208}     & 10    & 1.348          & 10        \\
                                       &                            & 150                   & 59 150.1                                      & 2.051       & 10       & 1.306          & 10          & 1.446             & 10    & 1.399             & 10   & 1.278             & 10    & 1.288             & 10    & \textbf{1.248}     & 10    & 1.368          & 10        \\ \cline{2-20} 
                                       & \multirow{2}{*}{-0.8}      & 50                    & 57 694.4                                      & 1.901       & 10       & 1.318          & 10          & 1.354             & 10    & 1.341             & 10   & 1.239             & 10    & 1.252             & 10    & \textbf{1.230}     & 10    & 1.365          & 10        \\
                                       &                            & 100                   & 239 891.8                                     & 1.921       & 10       & 1.332          & 10          & 1.384             & 10    & 1.355             & 10   & 1.229             & 10    & 1.285             & 10    & \textbf{1.218}     & 10    & 1.421          & 10        \\ \hline
\multirow{8}{*}{10\textasciicircum{}4} & \multirow{3}{*}{0.8}       & 50                    & 1 425.1                                       & 1.749       & 10       & 1.195          & 10          & 1.241             & 10    & 1.252             & 10   & \textbf{1.155}    & 10    & 1.168             & 10    & 1.176              & 10    & 1.430          & 10        \\
                                       &                            & 100                   & 10 520.8                                      & 1.841       & 10       & 1.245          & 10          & 1.305             & 10    & 1.290             & 10   & 1.219             & 10    & \textbf{1.189}    & 10    & 1.201              & 10    & 1.407          & 10        \\
                                       &                            & 150                   & 19 521.4                                      & 1.835       & 10       & 1.211          & 10          & 1.298             & 10    & 1.281             & 10   & 1.193             & 10    & \textbf{1.164}    & 10    & 1.208              & 10    & 1.385          & 10        \\ \cline{2-20} 
                                       & \multirow{3}{*}{0.0}       & 50                    & 11 398.0                                      & 2.025       & 10       & 1.256          & 10          & 1.479             & 10    & 1.465             & 10   & 1.269             & 10    & 1.253             & 10    & \textbf{1.199}     & 10    & 1.359          & 10        \\
                                       &                            & 100                   & 57 365.2                                      & 2.073       & 10       & 1.239          & 10          & 1.454             & 10    & 1.430             & 10   & 1.276             & 10    & 1.231             & 10    & \textbf{1.225}     & 10    & 1.314          & 10        \\
                                       &                            & 150                   & 134 437.2                                     & 2.101       & 10       & 1.260          & 10          & 1.439             & 10    & 1.409             & 10   & 1.264             & 10    & 1.214             & 10    & \textbf{1.212}     & 10    & 1.263          & 10        \\ \cline{2-20} 
                                       & \multirow{2}{*}{-0.8}      & 50                    & 77 565.7                                      & 1.929       & 10       & 1.278          & 10          & 1.403             & 10    & 1.384             & 10   & 1.233             & 10    & 1.227             & 10    & \textbf{1.208}     & 10    & 1.370          & 10        \\
                                       &                            & 100                   & 334 789.5                                     & 2.009       & 10       & 1.273          & 10          & 1.409             & 10    & 1.387             & 10   & 1.249             & 10    & 1.239             & 10    & \textbf{1.215}     & 10    & 1.350          & 10       
\end{tabular}
    \caption{Results for the dynamic strategies.}
    \label{tab:newTabDynamic}
\end{sidewaystable}
%\end{table} \end{landscape}

The best results for the instance benchmark are attained with dynamic grouping approaches. In instances with $r\geq10^3$, the best-performing dynamic grouping strategies (GGEC2, GGEC3, and GNECU2) can reduce computational cost by up to 18\% compared to their non-grouping counterparts (ECU and SRKB4), and the difference can reach 40\% when compared to F1.
Moreover, GNECU2 improved results obtained with a priori strategies in all instance sets with $r\geq10^3$.

However, in small-range instances, dynamic strategies do not yield favorable results.
For these, GGEC2 is the best-performing strategy among the dynamic grouping strategies. Yet, it is worse than the a priori strategies in $5$ of the $8$ instance sets with $r = 10^2$. In this context, single-triangle dynamic strategies have a clear advantage.
In fact, ECU is the leading strategy over the 8 test sets, with an improvement of $12\%$ in two of them.

The results for single-triangle dynamic strategies show an almost systematic benefit in using extended coverage and ND bound update. ECU provides better average performance than SRKB4 in $21$ out of the $24$ instance sets, with only a minor difference in the remaining three. The most notable gain was $3.75\%$ in the instance set with 
$r = 10^3$, $\rho = 0.0$, and 
$n = 150$. Meanwhile, ECU is at most $1\%$ worse in the $3$ instance sets in which SRKB4 is better.

GNECU2, whose idea is to promote coverage by exploring larger triangles first, is %considered 
the winning strategy. It is best in $10$ out of $24$ test sets and second best in $3$ of those. Moreover, it is at most $4\%$ worse than the best strategy for instances with $r \geq 10^3$. It presents an %assured 
improvement over GNECU3, which did not perform well. Regarding the angle measures, it is also true that GAEC2 is better than GAEC3, despite the latter having decent performance overall.
Such results indicate that
the two-sized variant of each strategy achieves the most powerful balance between grouping and coverage
These results corroborate our observations in Section~\ref{sec:optRes}
confirming that the best performance is achieved with groupings of average size two.
Finally, the dynamic methods that use the ND bound as a measure perform slightly better than those using the group angle, in contrast with the findings for a priori strategies in Section~\ref{sec:AprioriRes}.

\section{Conclusions}
\label{sec:conc}

In this paper, we propose grouping strategies to 
improve the second phase of two-phase methods 
based on ranking algorithms. 
%grouping triangles is beneficial in various instance types of the \acrlong*{bomst} problem. 
We also present a method for obtaining an optimal grouping for a given instance. 
Although not applicable to large instance sizes, it provides not only a reference to assess the performance of our grouping strategies but also gives further insight into their design. 
For instance, for the Bi-Objective Spanning Tree problem, a group size of more than three is not required.

Our best-performing grouping 
strategies use
information from specific 
group measures of the supported
points identified in the first phase, 
as well as from previously formed groups.
In particular, our extended coverage method, which is based on the work in \cite{steiner2008computing}, gives
a significant improvement.
Experimental results show that these strategies incur only a computational cost of at most 25\% higher than the optimal grouping and improve over the current ranking-based 
two-phase methods.

This paper discusses only two 
measures for guiding group formation: ND bound and group angle. Although not reported here, we also considered other measures, such as the largest distance, or a fraction of it, between the supported point of a group and an upper bound, as well as
the total area covered by the group.
However, these approaches yielded poor
results. It remains an open question whether other geometric measures could further improve our results.

A natural next step is to parallelize our approaches. Although not widely explored in the literature, two-phase methods offer a clear advantage over other approaches for solving bi-objective combinatorial optimization problems, as they are inherently parallelizable. In our case, additional challenges include how to adapt group formation based on the number of processors available and how to reduce communication costs, particularly for dynamic grouping strategies.

Notably, our methods are generalizable to other bi-objective combinatorial optimization problems, with the only requirement being the availability of a ranking algorithm.
If the time budget is too constrained and if optimality is
not a strong requirement, the ranking algorithm may
terminate early.

\section*{Acknowledgments}

This work is partially financed through national funds by FCT - Fundação para a Ciência e a Tecnologia, I.P., in the framework of the Project UIDB/00326/2025 and UIDP/00326/2025. 
The first author acknowledges FCT for the Ph.D. fellowship 2022.14645.BD. 
Part of this work was conducted by the second author as a Visiting Professor at Paris Dauphine University - PSL, and by the first author during a short-term scientific mission at the same institution. This mission was funded by the COST Action Randomised Optimisation Algorithms Research Network (ROAR-NET), CA22137, supported by COST (European Cooperation in Science and Technology).

\bibliographystyle{elsarticle-harv}
\bibliography{main.bib}

\end{document}